\documentclass[final]{IEEEtran}

\usepackage[pdftex]{graphicx}
\usepackage{epstopdf}
\usepackage[cmex10]{amsmath}
\usepackage{amssymb}
\usepackage{enumerate}
\usepackage{algorithm}
\usepackage{algpseudocode}
\usepackage{empheq}
\usepackage{eqparbox}
\usepackage{amsfonts}
\usepackage{amsthm}
\usepackage{url}
\usepackage{color}
\usepackage{cite}
\usepackage{acro}

\newcommand{\be}{\begin{equation}}
\newcommand{\ee}{\end{equation}}
\newcommand{\ba}{\begin{array}}
\newcommand{\ea}{\end{array}}
\newcommand{\bea}{\begin{eqnarray}}
\newcommand{\eea}{\end{eqnarray}}
\newcommand{\herm}{^{\mbox{\scriptsize H}}}

\newcommand{\vbar}{\raisebox{.17ex}{\rule{.04em}{1.35ex}}}
\newcommand{\vbarind}{\raisebox{.01ex}{\rule{.04em}{1.1ex}}}
\newcommand{\R}{\ifmmode {\rm I}\hspace{-.2em}{\rm R} \else ${\rm I}\hspace{-.2em}{\rm R}$ \fi}
\newcommand{\T}{\ifmmode {\rm I}\hspace{-.2em}{\rm T} \else ${\rm I}\hspace{-.2em}{\rm T}$ \fi}
\newcommand{\N}{\ifmmode {\rm I}\hspace{-.2em}{\rm N} \else \mbox{${\rm I}\hspace{-.2em}{\rm N}$} \fi}
\newcommand{\B}{\ifmmode {\rm I}\hspace{-.2em}{\rm B} \else \mbox{${\rm I}\hspace{-.2em}{\rm B}$} \fi}
\newcommand{\Hil}{\ifmmode {\rm I}\hspace{-.2em}{\rm H} \else \mbox{${\rm I}\hspace{-.2em}{\rm H}$} \fi}
\newcommand{\C}{\ifmmode \hspace{.2em}\vbar\hspace{-.31em}{\rm C} \else \mbox{$\hspace{.2em}\vbar\hspace{-.31em}{\rm C}$} \fi}
\newcommand{\Cind}{\ifmmode \hspace{.2em}\vbarind\hspace{-.25em}{\rm C} \else \mbox{$\hspace{.2em}\vbarind\hspace{-.25em}{\rm C}$} \fi}
\newcommand{\Q}{\ifmmode \hspace{.2em}\vbar\hspace{-.31em}{\rm Q} \else \mbox{$\hspace{.2em}\vbar\hspace{-.31em}{\rm Q}$} \fi}
\newcommand{\Z}{\ifmmode {\rm Z}\hspace{-.28em}{\rm Z} \else ${\rm Z}\hspace{-.28em}{\rm Z}$ \fi}

\newcommand{\ds}{\displaystyle}
\newtheorem{thm}{Theorem}

\newtheorem{cor}{Corollary}

\newtheorem{remark}{Remark}

\newcommand{\M}[1]{\mathbf{#1}}

\newcommand{\rev}[1]{{\color{black} #1}}

\DeclareMathOperator*{\maxmin}{max\,min}

\begin{document}
	\vspace{-0.8cm}
\title{Multi-antenna Interference Management for Coded Caching}%
\author{
Antti T\"olli, \IEEEmembership{Senior Member, IEEE,}
Seyed Pooya Shariatpanahi, 
Jarkko Kaleva, \IEEEmembership{Member, IEEE} and 
Babak Khalaj, \IEEEmembership{Member, IEEE}
\thanks{	This work was supported in part by the Academy of Finland grants No. 279101 and 319059, as well as 6Genesis Flagship grant No. 318927. B. Hossein Khalaj was supported in part by a grant from the Institute for
Research in Fundamental Sciences (IPM) and in part by the Iran National
Science Foundation under Grant 97022420.
	Parts of this work has been published in 2018 IEEE International Symposium of Information Theory, 2018 International Workshop on Contenct Caching and Delivery in Wireless Networks and 2018 Asilomar Conference on Signals, Systems and Computers. 
	
	A. T\"olli and J. Kaleva are with Centre for
	Wireless Communications, University of Oulu P.O. Box 4500, FIN-90014 University of Oulu, Finland \{antti.tolli, jarkko.kaleva\}@oulu.fi. 
	
	S. P. Shariatpanahi is with the School of
Electrical and Computer Engineering, College of Engineering, University of
Tehran, Tehran IR 143995-7131, Iran (p.shariatpanahi@ut.ac.ir). 

B. Khalaj is with the Department of Electrical Engineering, Sharif University of Technology,
Tehran 11365-11155, Iran, and School of Computer Science, Institute for
Research in Fundamental Sciences, Tehran 19395-5746, Iran (e-mail:
khalaj@sharif.ir).
 
 }
}

\maketitle
	\vspace{-1.8cm}
\begin{abstract}
A multi-antenna broadcast channel scenario is considered where a base station delivers contents to cache-enabled user terminals. A joint design of coded caching (CC) and multigroup multicast beamforming is proposed to benefit from spatial multiplexing gain, improved interference management and the global CC gain, simultaneously. The developed general content delivery strategies utilize the multiantenna multicasting opportunities provided by the CC technique while optimally balancing the detrimental impact of both noise and inter-stream interference from coded messages transmitted in parallel.
Flexible resource allocation schemes for CC are introduced where the multicast beamformer design and the receiver complexity are controlled by varying the size of the subset of users served during a given time interval, and the overlap among the multicast messages transmitted in parallel, indicated by parameters $\alpha$ and  $\beta$, respectively. 
Degrees of freedom (DoF) analysis is provided showing that the DoF only depends on $\alpha$ while it is independent of $\beta$.
The proposed schemes are shown to provide the same degrees-of-freedom at high signal-to-noise ratio (SNR) as the state-of-art methods and, in general, to perform significantly better, especially in the finite SNR regime, than several baseline schemes. %
\end{abstract}

\section{Introduction}

Video delivery will be responsible for about $80$ percent of the mobile traffic by 2021 according to the Cisco traffic forecast report \cite{Cisco2019}, which draws attention to the content caching technology as a key element of next generation networks. 
Content caching involves prefetching most popular contents at network edge during low-congested hours mitigating network overcrowding when the real requests of users will show up. This idea has been widely investigated in various wireless network scenarios such as using cache-enabled helpers \cite{Shanmugam2013}, device-to-device collaboration \cite{GolrezaeiD2D2012, JiD2D2015}, small cell networks \cite{Bastug2014}, multi-hop networks \cite{Gitzenis2013}, and Cooperative Multi-Point (CoMP) \cite{Liu2014}.

While the above works clearly demonstrate the benefits of caching in wireless networks%
, the pioneering work of \cite{MaddahAli-2014} considers an information theoretic framework for the caching problem, through which a novel \emph{coded caching} (CC) scheme is proposed. In the coded caching scheme the idea is that, instead of simply replicating high-popularity contents near-or-at end-users (at the cache content placement phase), one should spread different contents at different caches. This way, at the content delivery phase, common coded messages could be broadcast to different users with different demands, that would benefit all of the users resulting in substantial gains in large networks. This \emph{global caching gain} relies on the observation that almost in all communication scenarios, broadcasting is much simpler than unicasting. %
Also, as was proven later in~\cite{Wan-Tuninetti-Piantanida-ITW16,Yu2018}, the performance of this CC scheme is optimal under the assumption of uncoded prefetching, i.e. when coding is allowed only at the delivery phase. Follow-up works extend the coded caching scheme proposed in \cite{MaddahAli-2014} to other setups such as online coded caching \cite{Pedarsani2016}, hierarchical coded caching \cite{Karamchandi2016},  and multi-server scenarios \cite{Shariatpanahi2016}.  All these works suggest that the same kind of CC gain is achievable under various network models.

In order to examine the CC approach in wireless networks the specific characteristics of wireless medium (such as the broadcast nature, fading, and interference) must be investigated to be able to implement the original idea of~\cite{MaddahAli-2014} in mobile delivery scenarios. In order to achieve this goal, in this paper, we investigate the potentials of applying CC to a single-cell multiple-input single-output (MISO) broadcast channel (BC). In such a scenario a multi-antenna base station (BS) transmitter, which has access to the contents library, satisfies content requests of single-antenna users (mobile devices) via a shared wireless medium. The users are cache-enabled, and thus, before the delivery phase begins, they have cached relevant data from the library during off-peak hours. We focus on a joint design of the beamforming scheme used at the BS and the CC design of multicast messages such that the achievable delivery rate is maximized in finite SNR regime. The main goal of our paper is to employ the multiple antennas at the transmitter to manage the interaction between noise and interference between coded messages (i.e., inter-stream interference) and at the same time to benefit from the gains promised by the CC paradigm. 

\subsection{Related Work}

In the context of benefiting from CC gains in wireless networks, the authors in~\cite{Zhang2017} consider the effect of delayed channel state information at the transmitter (CSIT) and demonstrate a synergy between CSIT and caching. Moreover, the work~\cite{Naderalizadeh2017} investigates wireless interference channels where both the transmitters and receivers are cache-enabled. They show that, considering one-shot transmission schemes, the caches at receive and transmit sides are of equal value in the sense of network DoF, which is also confirmed to be the case in cellular networks~\cite{Naderalizadeh2017-2}. In contrast,~\cite{Tahmasbi2017} treats the same setup with mixed-CSIT and unveils the importance of receiver side memory in such a scenario. Cache-enabled interference channels are also investigated by other works such as~\cite{MaddahIC2015, Cao2017,Hachem2016,Roig2017} which do not restrict the schemes to be one-shot, and thus benefit from practically more complex interference alignment (IA) schemes. Also, the authors in~\cite{Ji2016} investigate CC schemes in wireless device-to-device networks and adapt the original CC scheme to a server-less setup, while~\cite{Shabani2016} shows the benefit of device mobility in such scenarios. Furthermore, the cache-enabled cloud radio access networks (C-RAN) are studied in~\cite{Zhang2018}. 

All the aforementioned papers consider wireless networks in the high signal-to-noise-ratio (SNR) regime, expressing their performance in terms of degrees-of-freedom (DoF). As high SNR analysis is not always a good indicator for practical implementations performance, there is still a gap which should be filled in with finite SNR analysis of the CC idea. The papers~\cite{Ngo2017} and~\cite{Shariatpanahi2017} propose different CC schemes in a wireless MISO-BC model, and provide a finite SNR analysis, in different system operating regimes. While the main idea in~\cite{Ngo2017} is to use rate-splitting along with CC, the authors in \cite{Shariatpanahi2017} propose a joint design of CC and zero-forcing (ZF) to benefit from the spatial multiplexing gain and the global gain of CC, at the same time. While the ideas in \cite{Shariatpanahi2017} originally came from adapting the multi-server CC scheme of \cite{Shariatpanahi2016} (which is almost optimal in terms of DoF as shown in \cite{Naderalizadeh2017}) to a Gaussian MISO-BC, the interesting observations in \cite{Shariatpanahi2017} reveal that careful code and beamformer design modifications have significant effects on the finite SNR performance. 

Moreover, it should be noted that \cite{Piovano2017} also considers using the rate-splitting along with CC and propose schemes benefiting from spatial multiplexing and CC gains in a MISO-BC setup. However, as shown in \cite{Shariatpanahi-Caire-Khalaj-TIT18} the resulting DoF performance is worse than the zero-forcing proposal in \cite{Shariatpanahi2017}, and, consequently, is  inferior to our scheme as well. Although the works \cite{Amiri2017} and \cite{Bidokhti2017} consider the finite SNR performance of coded caching in broadcast channels, they assume a single-antenna transmitter, and thus in contrast to our paper, the interference management potentials of transmitter via its multiple antennas are not investigated. 
Finally, the authors in~\cite{Lampiris-Elia-2018} addressed the subpacketization bottleneck in the multicast CC schemes \cite{MaddahAli-2014,Shariatpanahi2016, Shariatpanahi2017}, and proposed a simple ZF based multiantenna transmission scheme substantially reducing the required subpacketization, while providing the same high SNR DoF as~\cite{Shariatpanahi2016, Shariatpanahi2017}.

\subsection{Main Contributions}

In this paper, extending the joint interference nulling and CC concept originally proposed in~\cite{Shariatpanahi2017,Shariatpanahi-Caire-Khalaj-TIT18}, a joint design of CC and generic multicast beamforming is introduced to simultaneously benefit from spatial multiplexing gain, improved management of inter-stream interference from coded messages transmitted in parallel, and the global caching gain. Our proposal results in a general content delivery scheme for any values of the problem parameters, i.e., the number of users $K$, library size $N$, cache size $M$, and number of transmit antennas $L$ such that $t=KM/N$ is an integer value.
Due to the ZF beamforming constraint, the number of antennas is restricted to be strictly $L\leq K-t$ in~\cite{Shariatpanahi2017,Shariatpanahi-Caire-Khalaj-TIT18} as the null space beamformer is unique only when $L\leq K-t$. However, the generic multicast beamformer design introduced in this paper can be designed to fully benefit from the the interference free signal space even when $L>K-t$. 
The general signal-to-interference-plus-noise ratio (SINR) expressions are handled directly to optimally balance the detrimental impact of both noise and inter-stream interference at low SNR. As the resulting optimization problems are not necessarily convex, successive convex approximation (SCA) of non-convex SINR constraints, similarly to~\cite{Venkatraman-Tolli-Juntti-Tran-TSP17}, are used to devise efficient iterative algorithms. \rev{In this paper, the 
multigroup multicast beamformer design is also generalized to any combination of \textit{overlapping user groups} while assuming successive interference cancellation (SIC) receiver}.

\rev{The generalized flexible multigroup multicast beamformer design proposed in this paper allows us to introduce innovative flexible resource allocation schemes for coded caching. Depending on the spatial degrees of freedom and the available SNR, a varying number of multicast  messages can be transmitted in parallel to distinct subsets of users.
Instead of always serving a group of $t+L$ users as in~\cite{Shariatpanahi2017,Shariatpanahi-Caire-Khalaj-TIT18}, \textit{the size of the user subset served during a given time interval} is controlled by a parameter $\alpha$ such that $t+\alpha \leq t+L$.
Furthermore, the parameter $\beta$ is introduced to control the \textit{overlap among the multicast messages} transmitted in parallel. It defines how many parallel messages the receiver should be able to distinguish from each other using the SIC receiver.
The benefits of new $\alpha$- and $\beta$-schemes are} twofold. First, the complexity of the beamformer design \rev{at the transmitter} is managed by controlling the number of constraints and variables in the corresponding optimization problem. \rev{The receiver complexity depends on the number of parallel messages to be decoded by each user, which is controlled by the parameter $\beta$. In the least complex form of implementation with $\beta=1$, the multicast messages do not overlap at all.  This results in linear receiver implementation, which does not require SIC unlike in the general case. Second, by using $\alpha<L$, a better rate performance can be attained at low to medium SNR range} by exploiting the transmit antennas to achieve multiplexing gain and at the same time compensating for the worst users channel effects. 
 Thus, with a \rev{modest loss in performance only at high SNR region}, the complexity of both the receiver and transmitter implementation can be significantly reduced\rev{, compared to the generalized multicast beamformer design with $\alpha=\beta=L$}. Finally, DoF analysis of the proposed schemes is provided showing that the DoF only depends on $\alpha$ and it is independent of $\beta$.

Parts of this paper have been published in the conference
publications~\cite{Tolli-Shariatpanahi-Kaleva-Khalaj-ISIT18,Tolli-Shariatpanahi-Kaleva-Khalaj-WiOPT18,Tolli-Shariatpanahi-Kaleva-Khalaj-Asilomar18}. Considering simple $3$- and $4$-user scenarios, the basic idea of combining multi-group multicast beamformer design and CC was first introduced in~\cite{Tolli-Shariatpanahi-Kaleva-Khalaj-ISIT18}, while the reduced complexity multicast mode selection idea and the simple linear multicast beamforming strategy were introduced in~\cite{Tolli-Shariatpanahi-Kaleva-Khalaj-WiOPT18} and~\cite{Tolli-Shariatpanahi-Kaleva-Khalaj-Asilomar18}, respectively. 
In this paper, in addition to simple scenarios considered in~\cite{Tolli-Shariatpanahi-Kaleva-Khalaj-ISIT18,Tolli-Shariatpanahi-Kaleva-Khalaj-WiOPT18,Tolli-Shariatpanahi-Kaleva-Khalaj-Asilomar18}, a general content delivery scheme applicable for a wide range of the problem parameters values is provided along with a corresponding DoF analysis. \rev{Furthermore, all Matlab codes to regenerate the results of this paper are available online at \url{https://github.com/kalesan/sim-cc-miso-bc}.}

In this paper we use the following notations. We use $(.)\herm$ to denote the Hermitian of a complex matrix. Let $\C$ and $\N$ denote the set of complex and natural numbers and $\|.\|$ be the norm of a complex vector. Also $[m]$ denotes the set of integer numbers
$\{1, . . . , m\}$, and $\oplus$ represents addition in the corresponding finite field. For any vector $\M{v}$, we define $\M{v}^{\perp}$ such that $\M{v}\herm \M{v}^{\perp} = 0$.  Moreover, $\mathcal{A}$ and $|\mathcal{A}|$ denote a set of indexes and its cardinality, while a collection of sets and the number of such sets are indicated by $\mathsf{B}$ ans $|\mathsf{B}|$, respectively.

\section{System Model}
\label{sec:ibc_sysmodel}

Downlink transmission from a single $L$-antenna BS serving $K$ cache enabled single-antenna users is considered. The BS is assumed to 
have access to a library of $N$ files
$\{W_1, \ldots, W_N\}$, each of size $F$ bits.
Each user $k$ is equipped with a cache memory of $MF$ bits and has a message $Z_k = Z_k(W_1, \ldots, W_N)$ stored in its cache, where $Z_k(\cdot)$ denotes a function of the library files with entropy not larger than $MF$ bits.  This operation is referred to as the {\em cache content placement}, and it is performed once and at no cost, e.g. during network off-peak hours.

Upon a set of requests \rev{$d_k \in [N]$} at the \emph{content delivery} phase, the BS multicasts coded signals, such that at the end of transmission all users can reliably decode
their requested files. Notice that user $k$ decoder, in order to produce the decoded file $\widehat{W}_{d_k}$, makes use of its own cache content $Z_k$ as well as its own received signal from the wireless channel.

The received signal at user terminal $k$ at time instant $i, i = 1,\ldots, n$ can be written as
\begin{equation}
\label{eq:recvsig}
{y}_k(i) = \M{h}_{k}\herm (i)\sum_{\mathcal{T}\subseteq \mathcal{S}} \M{w}_\mathcal{T}^\mathcal{S}(i) \tilde{X}_\mathcal{T}^\mathcal{S}(i)  + {z}_k(i),
\end{equation}
where the channel vector between the BS and UE $k$ is denoted by $\M{h}_{k} \in \mathbb{C}^{	L}$,  $ \M{w}_\mathcal{T}^\mathcal{S}$ is the multicast beamformer dedicated to users in subset $\mathcal{T}$ of set \rev{$\mathcal{S} \subseteq [K]$} of users, and $\tilde{X}_\mathcal{T}^\mathcal{S}(i)$ is the corresponding multicast message chosen from a unit power complex Gaussian codebook at time instant $i$. The size of $\mathcal{T}$ depends on the parameters $K$, $M$ and $N$ such that $|\mathcal{T}|=t+1$, where $t\triangleq KM/N$~\cite{Shariatpanahi2016,Shariatpanahi-Caire-Khalaj-TIT18}. The main idea in CC is (by careful cache content placement) to provide multicasting opportunities to groups of size $t+1$, in which a common coded message would be useful for all the members of the multicast group. This is called the \emph{Global Coded Caching Gain}, and is proportional to the total memory of the users, i.e., $KM$, normalized by the library size, i.e., $N$ (for more details refer to \cite{MaddahAli-2014}). In the following, the time index $i$ is ignored for simplicity. The receiver noise is assumed to be 
circularly symmetric zero mean ${z}_k \sim \mathcal{CN}(0, N_0)$. Finally, the CSIT of all $K$ users is assumed to be perfectly known at the BS. 
Note that \eqref{eq:recvsig} is defined for a \rev{given set of users $\mathcal{S}$} served at time instant $i$. Depending on the chosen transmission strategy and parametrization, the delivery of the requested files ${W}_{d_k} \ \forall \ k$ may require multiple time intervals/slots carried out for all possible partitionings and subsets \rev{$\mathcal{S}\subseteq [K]$}. 

\section{Multicast Beamforming for Coded Caching}\label{sec:multicast_bf_cc}

In this work, we focus on the worst-case (over the users) delivery rate at which the system can serve all users requesting for any file of the library. Multicasting opportunities due to the coded caching~\cite{MaddahAli-2014,Shariatpanahi2016,Shariatpanahi2017} are utilized to devise an efficient multiantenna multicast beamforming method that perform well over the entire SNR region. 

\rev{Before 
presenting a high-level description of our proposed scheme, 
let us first provide a brief review of the original coded caching scheme proposed in \cite{MaddahAli-2014}, and then describe the multi-server and multi-antenna extensions in \cite{Shariatpanahi2016}, and \cite{Shariatpanahi2017,Shariatpanahi-Caire-Khalaj-TIT18}, respectively. 
Assuming $t=KM/N \in \mathbb{N}$, the scheme proposed in~\cite{MaddahAli-2014} first divides each file $W_n$ into ${K \choose t}$ subfiles (i.e., $W_n=\{W_{n,\tau}, \tau \subset [K], |\tau|=t\}$), then user $k$ caches all subfiles $W_{n,\tau}$ in which $k \in \tau,  \ \forall \ n$. In the delivery phase, it can be easily seen that for each $t+1$ subset of users $\mathcal{T}$, a common coded multicast message $\oplus_{k \in \mathcal{T}} W_{d_k,T \backslash k}$ would benefit all the users in the subset $\mathcal{T}$ with providing them one subfile of their requested file~\cite{MaddahAli-2014}. If all such multicast coded messages are successfully delivered, then it can be shown that the missing subfiles will be received at all the users. This will result into the total normalized delay of ${K \choose t+1} / {K \choose t} = (K-t) / (t+1)$~\cite{MaddahAli-2014}.

When
$L$ servers have access to the library, instead of a single server, they can collaboratively send coded messages each of which would benefit $t+L$ users
~\cite{Shariatpanahi2016}. This means the global caching gain (proportional to $t$) and the collaboration multiplexing gain (proportional to $L$) are additive. This will directly reduce the delay by a multiplicative factor of $(t+1)/(t+L)$, which results in the normalized delay of $(K-t)/(t+L)$. The scheme proposed in \cite{Shariatpanahi2016} combines zero-forcing gain and multicasting opportunities provided by coded caching as follows. Assume a subset of users of size $t+L$. Then according to the proposal in \cite{MaddahAli-2014}, a coded common message can be constructed to serve each $t+1$ subset of these users. Now by combining all ${t+L \choose t+1}$ such coded messages, with each message directed in the null space of $L-1$ undesired users, all of them can be sent simultaneously. However, any user $k$ will have to decode $m_k=\binom{t+L-1}{ t }$ different messages, which grows exponentially when $K$, $L$, $N$ are increased with the same ratio (linearly if $t=KM/N=1$). 

The multiserver scheme~\cite{Shariatpanahi2016} is adapted to a multiple-antenna transmitter delivering files to users via a wireless medium in \cite{Shariatpanahi2017,Shariatpanahi-Caire-Khalaj-TIT18}, where each coded message destined to $t+1$ users is nulled at $L-1$ non-desired users. 
For each user $k$, the received signal $y_k$ contains $m_k$ desired signals. Hence, from the receiver perspective, it appears as $m_k$-dimensional Gaussian multiple access channel (MAC) with a feasible rate region defined by $2^{m_k}-1$ rate constraints. Thus, a SIC structure has to be used at each receiver to decode the intended messages~\cite{Shariatpanahi2017,Shariatpanahi-Caire-Khalaj-TIT18}.


While adapting the multi-server coded caching scheme to the wireless multiple-antenna setup was shown to improve the rate of the system also at finite SNR \cite{Shariatpanahi-Caire-Khalaj-TIT18}, compared to non-coded schemes, using the ZF vectors at finite SNR is a highly sub-optimal strategy in general. At finite SNR, the detrimental impact of both inter-stream interference and noise needs to be balanced to arrive at the best performance. Thus, in our proposed scheme we address the challenge of finding the optimum multigroup multicast beamforming vectors with any combination of overlapping user groups for minimizing the delivery time while allowing controlled inter-stream interference among multicast messages transmitted in parallel. Although this modification will significantly improve the system performance especially at low SNR, it introduces a beamformer optimization problem which has significantly higher computational complexity than the simple ZF scheme used in~\cite{Shariatpanahi2017,Shariatpanahi-Caire-Khalaj-TIT18}. 

The aforementioned generalized multigroup multicast beamformer design allows us to introduce innovative flexible resource allocation schemes for coded caching, depending on the available transmit power or computational complexity constraints. 
Instead of always serving a group of $t+L$ users as in~\cite{Shariatpanahi2017,Shariatpanahi-Caire-Khalaj-TIT18}, \textit{the size of the user subset served during a given time interval} can be controlled by a parameter $\alpha$ such that $t+\alpha \leq t+L$.
Furthermore, the parameter $\beta$ is introduced to control the \textit{overlap among the multicast messages} transmitted in parallel. It defines how many parallel messages the receiver should be able to distinguish from each other using the  SIC receiver. Thus, the complexity of both the receiver and transmitter implementation can be significantly reduced without significant , compared to the generalized multicast beamformer design with fully overlapping multicast messages, $\alpha=\beta=L$.}


In \rev{ the rest of} this section, we first introduce the proposed concept and its variations in four simple scenarios and discuss the generalization of the proposed schemes in Section~\ref{sec:general}. The multigroup multicast beamformer design for the classical 3-user case~\cite{MaddahAli-2014,Shariatpanahi2016,Shariatpanahi2017} is first described in \textit{Scenario 1}, which in turn is extended to 4-user case in \textit{Scenario 2} to demonstrate how the size and complexity of the problem quickly increases for larger values of $K$. Reduced complexity beamformer design alternatives to \textit{Scenario 2} are introduced in \textit{Scenarios 3 and 4} by controlling the size \rev{$\alpha$} of the subset $\{\mathcal{S} \subseteq [K]\}$ served during a given time interval, and the overlap \rev{$\beta$} among the multicast messages transmitted in parallel, respectively.

\subsection{Scenario 1: $L\geq2$, $K=3$, $N=3$ and $M=1$}\label{sec:scenario1}
Consider a content delivery scenario illustrated in Fig.~\ref{fig:L2_K3_scenario1}, where a transmitter with $L\geq2$ antennas should deliver requests arising at $K=3$ users from a library $\mathcal{W}=\{A, B, C\}$ of size $N=3$ files each of $F$ bits. Suppose that in the cache content placement phase each user can cache $M=1$ files of $F$ bits, without knowing the actual requests beforehand. In the content delivery phase we suppose each user requests one file from the library. Following the same cache content placement strategy as in \cite{MaddahAli-2014} the cache contents of users are as follows
\begin{equation}\nonumber
Z_1 = \{A_1,B_1,C_1\}, Z_2 = \{A_2,B_2,C_2\}, Z_3 = \{A_3,B_3,C_3\}
\end{equation}
where each file is divided into 3 equal-sized subfiles.
\begin{figure}
	\centering
	\includegraphics[width=\columnwidth]{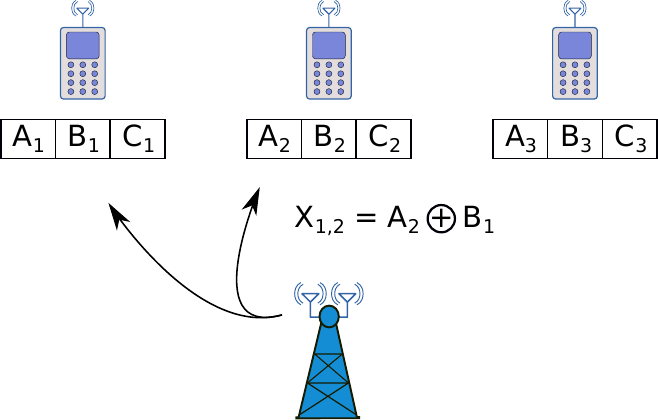}
	\caption{\emph{Scenario 1}:  $L=2$, $K=3$, $N=3$ and $M=1$}
	\label{fig:L2_K3_scenario1}
\end{figure}

At the content delivery phase, suppose that the 1st, the 2nd, and the 3rd user request files $A$, $B$, and $C$, respectively. In the simple broadcast scenario in \cite{MaddahAli-2014}, the following coded messages are sent to users $\mathcal{S}= \{1,2,3\}$  by the transmitter one after another
\begin{equation}
X_{1,2} = A_2 \oplus B_1, \
X_{1,3} = A_3 \oplus C_1, \
X_{2,3} = B_3 \oplus C_2
\end{equation}
where $\oplus$ represents summation in the corresponding finite field, and the superscript $\mathcal{S}$ is omitted for ease of presentation. In such coding scheme, each coded message is received by all $3$ users, but is only beneficial to $2$ of them. For example, $X_{1,2}$ is useful for the 1st and 2nd user only. %
It can be easily checked that after transmission is concluded, all users can decode their requested files. Moreover, for every possible combination of the users requests, the scheme works with the same cache content placement, but with another set of coded delivery messages.

Consequently, in \emph{Scenario 1}, we can combine the spatial multiplexing gain and the global caching gain following the scheme in~\cite{Shariatpanahi2017} (see also~\cite{Shariatpanahi2016,Naderalizadeh2017}). In~\cite{Shariatpanahi2017}, the unwanted messages at each user are forced to zero by sending
\begin{equation}
\mathbf{h}_3^{\perp}\tilde{X}_{1,2}+\mathbf{h}_2^{\perp}\tilde{X}_{1,3}+\mathbf{h}_1^{\perp}\tilde{X}_{2,3}
\end{equation}
where $\tilde{X}$ stands for the modulated $X$, chosen from a unit power complex Gaussian codebook~\cite{Shariatpanahi2017}. 

The key point here is to note that although this scheme is order-optimal in terms of DoF~\cite{Naderalizadeh2017} it is suboptimal at low SNR regime~\cite{Shariatpanahi2017,Shariatpanahi-Caire-Khalaj-TIT18}. Therefore, in this paper, instead of nulling interference at unwanted users, general multicast beamforming vectors $\M{w}_\mathcal{T}^\mathcal{S}$ are defined as
\begin{equation}
\sum_{\mathcal{T}\subseteq [3],\\ |\mathcal{T}|=2}  \!\!\!\! \M{w}_\mathcal{T}^\mathcal{S} \tilde{X}_\mathcal{T}^\mathcal{S} =\mathbf{w}_{1,2}\tilde{X}_{1,2}+\mathbf{w}_{1,3}\tilde{X}_{1,3}+\mathbf{w}_{2,3}\tilde{X}_{2,3}
\end{equation}
where $[K]$  denotes the set of integer numbers $\{1, . . . , K\}$ and the superscript $\mathcal{S}$ is omitted for simplicity. %
As a result, the received signals at users $1-3$ will be
\begin{align} \nonumber y_1 &=\underline{(\mathbf{h}_1\herm\mathbf{w}_{1,2})\tilde{X}_{1,2}}+\underline{(\mathbf{h}_1\herm\mathbf{w}_{1,3})\tilde{X}_{1,3}}+(\mathbf{h}_1\herm\mathbf{w}_{2,3})\tilde{X}_{2,3} +z_1 \\ \nonumber
y_2&=\underline{(\mathbf{h}_2\herm\mathbf{w}_{1,2})\tilde{X}_{1,2}}+(\mathbf{h}_2\herm\mathbf{w}_{1,3})\tilde{X}_{1,3}+\underline{(\mathbf{h}_2\herm\mathbf{w}_{2,3})\tilde{X}_{2,3}} +z_2 \\ \nonumber
y_3&=(\mathbf{h}_3\herm\mathbf{w}_{1,2})\tilde{X}_{1,2}+\underline{(\mathbf{h}_3\herm\mathbf{w}_{1,3})\tilde{X}_{1,3}}+\underline{(\mathbf{h}_3\herm\mathbf{w}_{2,3})\tilde{X}_{2,3}} +z_3
\end{align}
where the desired terms for each user are underlined. 

Let us focus on user $1$ who is interested in decoding both $\tilde{X}_{1,2}$, and $\tilde{X}_{1,3}$ while $\tilde{X}_{2,3}$ appears as Gaussian interference. Thus, from receiver 1 perspective, $y_1$ is a Gaussian multiple access channel (MAC). Suppose now user 1 can decode \emph{both} of its required messages $\tilde{X}_{1,2}$ and $\tilde{X}_{1,3}$ with the equal rate\footnote{Symmetric rate is imposed to minimize the time needed to receive both messages  $\tilde{X}_{1,2}$, and $\tilde{X}_{1,3}$.} 
\begin{equation}		
R^1_{MAC}=\min (\frac{1}{2}R^1_{Sum}, R^1_1,R^1_2)%
\end{equation} 
where \rev{$R^1_{Sum} = R^1_1+R^1_2$ and} the rates $ R^1_1$ and $R^1_2$ correspond to $\tilde{X}_{1,2}$, and $\tilde{X}_{1,3}$, respectively. 
Thus, the total useful rate is $2R^1_{MAC}$. Since the user 1 must receive the missing $2/3F$ bits ($A_2$ and $A_3$), the time needed to decode file $A$ is
$T_1=\frac{2F}{3} \frac{1}{2R^1_{MAC}}$.
As all the users decode their files \emph{in parallel}, the time needed to complete the decoding process is constrained by the worst user as
\begin{equation}\label{eq:sym_time}
T=\frac{2F}{3} \frac{1}{\min\limits_{k=1,2,3}2R^k_{MAC}}.
\end{equation}
Then, the \emph{Symmetric Rate (\textit{Goodput}) per user} will be 
\begin{equation}\label{eq:sym_rate}
R_{\mathrm sym}=\frac{F}{T}=3\min_{k=1,2,3}R^k_{MAC}
\end{equation}
which, when optimized with respect to the beamforming vectors, can be found as
\begin{equation}
\max_{\mathbf{w}_{2,3}, \mathbf{w}_{1,3}, \mathbf{w}_{1,2}}  \min_{k=1,2,3}R^k_{MAC}.
\end{equation}

Finally, the symmetric rate maximization for $K = 3$ is given as
\begin{align}\nonumber
\label{prob:equal_rate}
&\ds \underset{\substack{R^k, \gamma^k_i, \M{w}_\mathcal{T}, \forall k,i}}{\max}
\ds \min_{k=1,2,3} \min\left( \frac{1}{2} R_\text{sum}^k, R_{1}^k, R_{2}^k \right)\\
&\begin{array}{rl}
\mathrm{s.\ t.} \! \! \! \! 
& \ds R^k_1 \leq \log(1 + \gamma^k_1),  R^k_2 \leq \log(1 + \gamma^k_2), \\ 
& \ds R^k_\text{sum} \leq \log(1 + \gamma^k_1 + \gamma^k_2),\, k = 1,2,3, 
\\
& \ds \gamma^1_1 \leq \frac{|\M{h}_1\herm \M{w}_{1,2}|^2}
{|\M{h}_1\herm \M{w}_{2,3}|^2 + N_0},
\ds \gamma^1_2 \leq \frac{|\M{h}_1\herm \M{w}_{1,3}|^2}
{|\M{h}_1\herm \M{w}_{2,3}|^2 + N_0},\\
& \ds \gamma^2_1 \leq \frac{|\M{h}_2\herm \M{w}_{2,3}|^2}
{|\M{h}_2\herm \M{w}_{1,3}|^2 + N_0},
\ds \gamma^2_2 \leq \frac{|\M{h}_2\herm \M{w}_{1,2}|^2}
{|\M{h}_2\herm \M{w}_{1,3}|^2 + N_0}, \\
& \ds \gamma^3_1 \leq \frac{|\M{h}_3\herm \M{w}_{2,3}|^2}
{|\M{h}_3\herm \M{w}_{1,2}|^2 + N_0},
\ds \gamma^3_2 \leq \frac{|\M{h}_3\herm \M{w}_{1,3}|^2}
{|\M{h}_3\herm \M{w}_{1,2}|^2 + N_0},
\\
&  \sum_{\mathcal{T}\in\{\{1,2\}, \{1,3\}, \{2,3\}\}} \|\M{w}_\mathcal{T}\|^2 \leq \text{SNR} 
\end{array}
\end{align}
which can be equally presented in
an epigraph from as
\begin{equation}
\label{prob:equal_rate_epigraph}
\begin{array}{rl}
\ds \underset{r, \gamma^k, \M{w}_{k,i}}{\max.} & 
\ds r
\\
\mathrm{s.\ t.} 
& \ds r \leq \frac{1}{2}\log(1 + \gamma^k_1 + \gamma^k_2), \ k = 1,2,3,
\\
& \ds r \leq \log(1 + \gamma^k_1),\  r \leq \log(1 + \gamma^k_2) 
\\
& \text{The rest of the constraints as in~\eqref{prob:equal_rate}}.
\end{array}
\end{equation}
Problem~\eqref{prob:equal_rate_epigraph} is non-convex due to the SINR
constraints. Similarly to~\cite{Venkatraman-Tolli-Juntti-Tran-TSP17}, successive convex approximation (SCA) approach can be used to devise an iterative algorithm that is able to converge to a local solution. To begin with, the SINR
constraint for $\gamma^1_1$ can be reformulated as
\begin{align}
\gamma^1_1 &\leq \frac{|\M{h}_1\herm \M{w}_{1,2}|^2}
{|\M{h}_1\herm \M{w}_{2,3}|^2 + N_0}\\
\label{eq:sinr_constraint_before_sca}
|\M{h}_1\herm \M{w}_{2,3}|^2 + N_0 &\leq 
\frac{|\M{h}_1\herm \M{w}_{1,2}|^2 + |\M{h}_1\herm \M{w}_{2,3}|^2 + N_0}
{1 + \gamma^1_1}
\end{align}
\rev{where both sides of the inequality constraint are convex functions (quadratic or quadratic-over-linear) with respect to the optimization variables. In order to make it convex, the R.H.S of~\eqref{eq:sinr_constraint_before_sca} is linearly approximated (lower bounded) as
\begin{align}
\label{eq:sca}
\ds \mathcal{L}(\M{w}_{2,3},& \M{w}_{1,2}, \M{h}_{1}, \gamma^1_1) \triangleq 
\ds \frac{|\M{h}_1\herm \bar{\M{w}}_{1,2}|^2 + |\M{h}_1\herm \bar{\M{w}}_{2,3}|^2 + N_0}{1 + \bar{\gamma}^1_1} \nonumber    \\
&
- 
2\mathbb{R}\left(
\frac{\bar{\M{w}}_{1,2}\herm \M{h}_1 \M{h}_1\herm}{1 + \bar{\gamma}^1_1} 
\left(\M{w}_{1,2} - \bar{\M{w}}_{1,2}\right)\nonumber
\right) 
\\
& \ds - 2\mathbb{R}\left(
\frac{\bar{\M{w}}_{2,3}\herm \M{h}_1 \M{h}_1\herm}{1 + \bar{\gamma}^1_1} 
\left(\M{w}_{2,3} - \bar{\M{w}}_{2,3}\right)
\right)  \nonumber \\
&
+ \frac{|\M{h}_1\herm \bar{\M{w}}_{1,2}|^2 + |\M{h}_1\herm \bar{\M{w}}_{2,3}|^2 + N_0}{(1 + \bar{\gamma}^1_1)^2} 
\left(\gamma^1_1- \bar{\gamma}^1_1\right)
\end{align}
where $\mathbb{R}(\cdot)$ is the real part of the complex valued argument,} $\bar{\M{w}}_{k,i}$ and $\bar{\gamma}_1^1$ denote the fixed values (points of approximation) for the corresponding variables from the previous iteration. 
Using~\eqref{eq:sca} and reformulating the objective in the epigraph form, the approximated problem is written as
\begin{equation}
\label{prob:equal_rate_approx}
\begin{array}{rl}
\ds \underset{r, \gamma^k, \M{w}_\mathcal{T}}{\max.} & 
\ds r
\\
\mathrm{s.\ t.} 
& \ds r \leq 1/2\log(1 + \gamma^k_1 + \gamma^k_2), \\
&
\ds r \leq \log(1 + \gamma^k_1),\  r \leq \log(1 + \gamma^k_2) \, k = 1,2,3, 
\\
& \ds \mathcal{L}({\M{w}_{2,3}, \M{w}_{1,2}, \M{h}_{1}, \gamma^1_1}) \geq 
|\M{h}_1\herm \M{w}_{2,3}|^2 + N_0,\\
&
\ds \mathcal{L}({\M{w}_{2,3}, \M{w}_{1,3}, \M{h}_{1}, \gamma^1_2}) \geq 
|\M{h}_1\herm \M{w}_{2,3}|^2 + N_0,
\\
& \ds \mathcal{L}({\M{w}_{1,3}, \M{w}_{2,3}, \M{h}_{2}, \gamma^2_1}) \geq 
|\M{h}_2\herm \M{w}_{1,3}|^2 + N_0,
\\
&
\ds \mathcal{L}({\M{w}_{1,3}, \M{w}_{1,2}, \M{h}_{2}, \gamma^2_2}) \geq 
|\M{h}_2\herm \M{w}_{1,3}|^2 + N_0, 
\\
& \ds \mathcal{L}({\M{w}_{1,2}, \M{w}_{2,3}, \M{h}_{3}, \gamma^3_1}) \geq 
|\M{h}_3\herm \M{w}_{1,2}|^2 + N_0,\\
&
\ds \mathcal{L}({\M{w}_{1,2}, \M{w}_{1,3}, \M{h}_{3}, \gamma^3_2}) \geq 
|\M{h}_3\herm \M{w}_{1,2}|^2 + N_0,
\\
& \sum_{\mathcal{T}\in\{\{1,2\}, \{1,3\}, \{2,3\}\}} \|\M{w}_\mathcal{T}\|^2 \leq \text{SNR} 
\end{array}
\end{equation}
This is a convex problem that can be readily solved using existing convex
solvers. However, the logarithmic functions require further approximations to be
able to apply the convention of convex programming algorithms. Problem~\eqref{prob:equal_rate_approx} can be
equally formulated as computationally efficient second order cone problem (SOCP). %
To this end, we
note that the sum rate constraint can be bounded as
\begin{equation}\nonumber
r \leq \frac{1}{2}\log(1 + \gamma^k_1 + \gamma^k_2) =
\log(\sqrt{1 + \gamma^k_1 + \gamma^k_2}) \leq \sqrt{1 + \gamma^k_1 + \gamma^k_2}
\end{equation}
Now, the equivalent SOCP reformulation follows as
\begin{equation}
\label{prob:equal_rate_socp}
\begin{array}{rl}
\ds \underset{\tilde{r}, \gamma^k, \M{w}_k}{\max.} & 
\ds \tilde{r}
\\
\mathrm{s.\ t.} 
& \ds {\tilde{r}}^2 \leq 1 + \gamma^k_1 + \gamma^k_2, 
\ds \tilde{r} \leq 1 + \gamma^k_1,\  \tilde{r} \leq 1 + \gamma^k_2 \, k = 1,2,3, 
\\
& \text{The rest of the constraints as in~\eqref{prob:equal_rate_approx} } 
\text{.}
\end{array}
\end{equation}
Finally, a solution for the original problem~\eqref{prob:equal_rate} can be found by solving~\eqref{prob:equal_rate_approx} in an iterative manner using SCA, i.e, by updating the points of approximations $\bar{\M{w}}_{k,i}$ and $\bar{\gamma}_j^l$ in~\eqref{eq:sca} after each iteration. As each difference-of-convex constraint in~\eqref{eq:sinr_constraint_before_sca} is lower bounded by~\eqref{eq:sca}, the monotonic convergence of the objective of~\eqref{prob:equal_rate_approx} is guaranteed. Note that the final symmetric rates are achieved by time sharing between the rate allocations corresponding to different points (decoding orders) in the sum rate region of the MAC channel.

As a lower complexity alternative, a zero forcing solution, denoted as \emph{CC with ZF}, is also proposed\footnote{Note that the null space beamformer is unique only when $L=2$. Generic multicast beamformers can be designed within the interference free signal space when $L>2$ (See Section~\ref{sec:numexp}).}. By assigning $\M{w}_{1,2}=\mathbf{h}_3^{\perp}/\|\mathbf{h}_3^{\perp}\| \sqrt{p_{1,2}}$, $\M{w}_{1,3}=\mathbf{h}_2^{\perp}/\|\mathbf{h}_2^{\perp}\| \sqrt{p_{1,3}}$,
$\M{w}_{2,3}=\mathbf{h}_1^{\perp}/\|\mathbf{h}_1^{\perp}\| \sqrt{p_{2,3}}$, the interference terms are canceled and \eqref{prob:equal_rate} becomes:
\begin{align}
\label{prob:equal_rate_ZF}
&\ds \underset{\substack{R^k, \gamma^k, p_\mathcal{T}}}{\max}
\ds \min_{k=1,2,3} \min\left( \frac{1}{2} R_\text{sum}^k, R_{1}^k, R_{2}^k \right)\\
&\begin{array}{rl}
\mathrm{s.\ t.} 
& \ds R^k_\text{sum} \leq \log(1 + \gamma^k_1 + \gamma^k_2), \\
&
\ds R^k_1 \leq \log(1 + \gamma^k_1),  \ds R^k_2 \leq \log(1 + \gamma^k_2) \ \forall \ k, 
\\
& \ds \gamma^1_1 \leq u_{1,3} p_{1,2},
\ds \gamma^1_2 \leq u_{1,2} p_{1,3},
\ds \gamma^2_1 \leq u_{2,1} p_{2,3},\\
&
\ds \gamma^2_2 \leq u_{2,3} p_{1,2}, 
\ds \gamma^3_1 \leq u_{3,1} p_{2,3},
\ds \gamma^3_2 \leq u_{3,2} p_{1,3},
\\
&  \sum_{\mathcal{T}\in\{\{1,2\}, \{1,3\}, \{2,3\}\}} p_\mathcal{T} \leq \text{SNR} 
\end{array}\nonumber
\end{align}
where $u_{k,i}=|\M{h}_k\herm \mathbf{h}_i^{\perp}|^2/\|\mathbf{h}_i^{\perp}\|^2 N_0$.
This is readily a convex power optimization problem with three real valued variables, and hence it can be solved in an optimal manner.

In the following, three baseline reference cases for the proposed multiantenna caching scheme are introduced.

\subsubsection{1st Baseline Scheme: CC with ZF (equal power)~\cite{Shariatpanahi2017}}
If the multicast transmit powers are made equal, $p_{1,2}=p_{1,3}=p_{2,3}=SNR/3$, the resulting scheme is the same as originally published in~\cite{Shariatpanahi2017}.
\subsubsection{2nd Baseline Scheme: MaxMinSNR Multicasting} \rev{In this case, a single multicast stream is transmitted at a time \emph{without any inter-stream interference}. Thus, three time slots are required to deliver messages $X_{1,2}$, $X_{1,3}$ and $X_{2,3}$. In timeslot 1,} 
the message $X_{1,2}$ is delivered to the users 1 and 2 by sending the signal
\rev{$\mathbf{w}_{1,2} \tilde{X}_{1,2}$}. A single transmit beamformer \rev{$\mathbf{w}_{1,2}$} is found to minimize the time needed for multicasting the common message:\footnote{This multicast maxmin problem is NP-hard in general, but near-optimal solutions can be obtained by a semidefinite relaxation (SDR) approach, see~\cite{Shariatpanahi2017} and the references therein.}
\begin{equation}\label{eq:CCmaxminSNR1}
T_{1,2}=\frac{F / 3}{\maxmin\limits_{\|\mathbf{w}\rev{_{1,2}}\|^2 \leq SNR}\! \left(\log(1+\frac{|\mathbf{h}_1\herm\mathbf{w}_{1,2}|^2}{N_0}),\log(1+\frac{|\mathbf{h}_2\herm\mathbf{w}_{1,2}|^2}{N_0}) \right)}.
\end{equation}
Similarly, the messages $X_{1,3}$ and $X_{2,3}$ should be delivered to the users with corresponding \rev{multicast beamformers $\mathbf{w}_{1,3}$ and $\mathbf{w}_{2,3}$, and delivery} times $T_{1,3}$ and $T_{2,3}$. 
Finally the resulting symmetric rate (Goodput) per user will be
\begin{equation}\label{eq:rate_maxmin}
R_{\mathrm{maxmin}}={F}/({T_{1,2}+T_{1,3}+T_{2,3}}).
\end{equation}
Note that, in this scheme, only the coded caching gain is exploited, while the multiple transmit antennas are used just for the multicast beamforming gain.

\subsubsection{3rd Baseline Scheme: MaxMinRate Unicast}
In this scheme, only the local caching gain is exploited and the CC gain is ignored altogether. The BS simply sends $\min(K,L)$ parallel independent streams to the users at each time instant. All the users can be served in parallel if $L\geq K$. On the other hand, if $L<K$, the users need to be divided into subsets of size $L$ served at distinct time slots. 

Now, let us consider the case $L=2$ and $K=3$, and focus on users 1 and 2 in time slot 1. The transmitted signal to deliver $A_2$ and $B_1$ to users 1 and 2, respectively, is given as
$\mathbf{w}_{1} \tilde{A}_2+\mathbf{w}_{2} \tilde{B}_1$.
Thus, the delivery time of $F/3$ bits is
\begin{equation}\label{eq:unicast_time}
T_{1,2}=\ds\frac{F/3}{\max\limits_{\sum_{k=1,2}\|\mathbf{w}_k\|^2 \leq SNR} \min(R_1,R_2)}
\end{equation}
where
\begin{equation}\label{eq:rate_unicast}
R_k= \log\left(1+\frac{|\mathbf{h}_k\herm\mathbf{w}_k|^2}{\sum_{i\neq k}|\mathbf{h}_k\herm\mathbf{w}_i|^2+N_0}\right) \text{.}
\end{equation}
The minimum delivery time in~\eqref{eq:rate_maxmin} can be equivalently formulated as a maxmin SINR problem and solved optimally.
By repeating the same procedure for the subsets $\{1,3\}$ and $\{2,3\}$, the symmetric rate expression is equivalent to~\eqref{eq:rate_maxmin}.

\subsection{Scenario 2: $L\geq3$, $K=4$, $N=4$ and $M=1$}
In this scenario, the number of users $K$ and files $N$ is further increased in order to demonstrate how the size and complexity of the problem quickly increases for larger values of $K$. We assume that the BS transmitter has $L\geq3$ antennas, and there are $K=4$ users each with cache size $M=1$, requesting files from a library $\mathcal{W}=\{A, B, C, D\}$ of $N=4$ files. Following the same cache content placement strategy as in \cite{MaddahAli-2014} the cache contents of users are as follows
\begin{align} \nonumber
Z_1 &= \{A_1,B_1,C_1,D_1\},
Z_2 = \{A_2,B_2,C_2,D_2\}, \\
Z_3 &= \{A_3,B_3,C_3,D_3\},
Z_4 = \{A_4,B_4,C_4,D_4\}
\end{align}
where here each file is divided into four non-overlapping equal-sized subfiles.

At the content delivery phase, suppose that the users $1-4$ request files $A-D$, respectively. Here, we have $t\triangleq KM/N=1$ and the subsets $\mathcal{S}$ and $\mathcal{T}$ will be of size $4$ %
 and $t+1=2$, respectively (for details see \cite{Shariatpanahi2016,Shariatpanahi-Caire-Khalaj-TIT18} and Section~\ref{sec:general}). %
 Following the approach of \textit{Scenario 1}, the transmit signal vector is
\begin{align}\label{eq:S2_signal_vector}
\sum_{\mathcal{T}\subseteq \mathcal{S}, |\mathcal{T}|=2} \mathbf{w}_\mathcal{T} \tilde{X}_\mathcal{T} &= \mathbf{w}_{1,2} \tilde{X}_{1,2} + \mathbf{w}_{1,3} \tilde{X}_{1,3} + \mathbf{w}_{1,4} \tilde{X}_{1,4} \nonumber
\\
&
+ \mathbf{w}_{2,3} \tilde{X}_{2,3} + \mathbf{w}_{2,4} \tilde{X}_{2,4} + \mathbf{w}_{3,4} \tilde{X}_{3,4}
\end{align}
where
$X_{1,2}= A_2 \oplus B_1, %
X_{1,3} = A_3 \oplus C_1, %
X_{1,4} = A_4 \oplus D_1, %
X_{2,3} = B_3 \oplus C_2, %
X_{2,4} = B_4 \oplus D_2, %
X_{3,4} =  C_4 \oplus D_3.$
%

The received signal at user $k=1,2,3,4$ is written as
\begin{align} \nonumber
y_k &=  \underline{(\mathbf{h}_k\herm\mathbf{w}_{1,2}) \tilde{X}_{1,2}} + \underline{(\mathbf{h}_k\herm\mathbf{w}_{1,3}) \tilde{X}_{1,3}} + \underline{(\mathbf{h}_k\herm\mathbf{w}_{1,4}) \tilde{X}_{1,4}}\\ \nonumber
&+ (\mathbf{h}_k\herm\mathbf{w}_{2,3}) \tilde{X}_{2,3} + (\mathbf{h}_k\herm\mathbf{w}_{2,4}) \tilde{X}_{2,4} + (\mathbf{h}_k\herm\mathbf{w}_{3,4}) \tilde{X}_{3,4} +z_1 %
\end{align}
where, as an example, the desired terms of user $1$ are underlined. As in \emph{Scenario 1}, each user faces a MAC channel, now with three desired signals, three Gaussian interference terms, and one noise term. Suppose that user $k$ can decode each of its desired signals with the rate $R^k_{MAC}$. Consequently, this user receives useful information with the rate  $3R^k_{MAC}$, and the time required to fetch the entire file is $T_1=\frac{3F}{4}\frac{1}{3R^k_{MAC}}$. Following the same steps as in~\eqref{eq:sym_time}--\eqref{eq:sym_rate}, the symmetric rate per user can be found as
\begin{equation}\label{eq:symrate_4users}
R_{\mathrm sym}=\frac{F}{T}=4\max_{\mathbf{w}_\mathcal{T}, \mathcal{T}\subseteq[4], |\mathcal{T}|=2}  \min_{k=1,2,3,4}R^k_{MAC}
\end{equation}
where 
\begin{equation}\label{eq:MAC_4users}
R^k_{MAC} = \min \left(R_1^k, R_2^k, R_3^k, \frac{1}{2}R_4^k, \frac{1}{2}R_5^k, \frac{1}{2}R_6^k, \frac{1}{3}R_7^k\right)
\end{equation}
and where the rate bounds $ R_1^1$, $R^1_2$ and $R^1_3$ of user $1$, for example, correspond to $\tilde{X}_{1,2}$, $\tilde{X}_{1,3}$ and $\tilde{X}_{1,4}$, respectively. The bounds $R^1_4$, $R^1_5$ and $R^1_6$ limit the sum rate of any combination of two transmitted multicast signals, and finally $R^1_7$ is the sum rate bound for all 3 messages. 
%
\rev{The SCA method is again used to solve~\eqref{eq:symrate_4users}, similarly to \eqref{prob:equal_rate}-\eqref{prob:equal_rate_approx}.} 
%
%
%


\subsection{Scenario 3: $L\geq3$, $K=4$, $N=4$, $M=1$ and \rev{$\alpha=2$}}
In this example, a reduced complexity alternative for \textit{Scenario 2} is considered. Instead of fixing the size of the served user set to $|\mathcal{S}|\rev{=\min(t+L,K)}=4$ as in \textit{Scenario 2}, 
we \rev{set $\alpha=2$ and} restrict the size of the subsets $\mathcal{S} \subset [4] $ benefiting from a common transmitted signal to $|\mathcal{S}|\rev{=t+\alpha}=3$.  %
Thus, the size of the MAC channel for each user is reduced from $3$ to $2$ and each user needs to decode just $2$ multicast streams. This in turn, reduces the complexity of the problem for determining the beamforming vectors for each subset $\mathcal{S} \subset [4] $. %
As will be shown later, besides complexity reduction, controlling the size of each subset allows us to handle the trade-off between the multiplexing and multicast beamforming gains due to multiple transmit antennas, resulting in even better rate performance at certain SNR values. Note that for $|\mathcal{S}|=2$ \rev{($\alpha=1$)}, the beamformer design for each subset $\mathcal{S}$ reduces to the baseline max-min SNR scheme (see~\eqref{eq:CCmaxminSNR1} for $K=3$).

The cache content placement works similarly, except that each subfile is split into $2$ mini-files (indicated by superscripts) in order to allow different contents to be transmitted in each subset $\mathcal{S}$.
As a result, the following content is stored in user cache memories
\begin{align} \nonumber
&Z_1 = \{A_1^1,A_1^2,B_1^1,B_1^2,C_1^1,C_1^2,D_1^1,D_1^2\}, \\ \nonumber
&
Z_2 = \{A_2^1,A_2^2,B_2^1,B_2^2,C_2^1,C_2^2,D_2^1,D_2^2\} \\ \nonumber
&Z_3 = \{A_3^1,A_3^2,B_3^1,B_3^2,C_3^1,C_3^2,D_3^1,D_3^2\}, \\ \nonumber
&
Z_4 = \{A_4^1,A_4^2,B_4^1,B_4^2,C_4^1,C_4^2,D_4^1,D_4^2\}
\end{align}
Subsequently, we focus on the users $\mathcal{S} = \{1,2,3\}$. Let us send them the following transmit vector
\begin{equation}
\mathbf{w}_{1,2}\tilde{X}_{1,2}+\mathbf{w}_{1,3}\tilde{X}_{1,3}+\mathbf{w}_{2,3}\tilde{X}_{2,3}
\end{equation}
where
$X_{1,2}=A_2^1\oplus B_1^1, %
X_{1,3}=A_3^1\oplus C_1^1, %
X_{2,3}=B_3^1\oplus C_2^1$.
This transmission should be such that $X_\mathcal{T}$ is received correctly at all users in $\mathcal{T}\subset \{1,2,3\}, |\mathcal{T}|=2$ . Let us call the corresponding common rate for coding each $X_\mathcal{T}$ as $R_{1,2,3}$. Then, since each minifile is of length $F/8$, the time needed for this transmission is
$T_{1,2,3}=\frac{F}{8}\frac{1}{R_{1,2,3}}$.
Now we consider the other $3$-subsets (subsets of size $3$) of users. For the subset $\mathcal{S} = \{1,2,4\}$ the transmitter sends
\begin{equation}
\mathbf{w}_{1,2}\tilde{X}_{1,2}+\mathbf{w}_{1,4}\tilde{X}_{1,4}+\mathbf{w}_{2,4}\tilde{X}_{2,4}
\end{equation}
where
$X_{1,2}=A_2^2\oplus B_1^2, %
X_{1,4}=A_4^1\oplus D_1^1, %
X_{2,4}=B_4^1\oplus D_2^1$
each coded with the rate $R_{1,2,4}$ and the corresponding transmission time is $T_{1,2,4}=\frac{F}{8}\frac{1}{R_{1,2,4}}$. Please note that the subset $\{1,2\}$ appears for the second time, and thus the second minifiles are used for the coding. The other subsets $\{1,4\}$, and $\{2,4\}$ have not yet appeared and the first minifiles are still not transmitted. %
For the subsets $\mathcal{S} =\{1,3,4\}$ and $\mathcal{S} =\{2,3,4\}$ the transmitter sends
\begin{align}
\mathbf{w}_{1,3}\tilde{X}_{1,3}+\mathbf{w}_{1,4}\tilde{X}_{1,4}+\mathbf{w}_{3,4}\tilde{X}_{3,4} \\
\mathbf{w}_{2,3}\tilde{X}_{2,3}+\mathbf{w}_{2,4}\tilde{X}_{2,4}+\mathbf{w}_{3,4}\tilde{X}_{3,4}
\end{align}
respectively, where
$X_{1,3}=A_3^2\oplus C_1^2, \
X_{1,4}=A_4^2\oplus D_1^2, \
X_{3,4}=C_4^1\oplus D_3^1$
are coded with the rate $R_{1,3,4}$ with the corresponding transmission time $T_{1,3,4}=\frac{F}{8}\frac{1}{R_{1,3,4}}$, while %
$X_{2,3}=B_3^2\oplus C_2^2, %
X_{2,4}=B_4^2\oplus D_2^2, %
X_{3,4}=C_4^2\oplus D_3^2$
are coded with the rate $R_{2,3,4}$ and
$T_{2,3,4}=\frac{F}{8}\frac{1}{R_{2,3,4}}$.
Since these transmissions are done in different time slots, the \emph{Symmetric Rate Per User} of this example is
\begin{align}
&\frac{F}{T_{1,2,3}+T_{1,2,4}+T_{1,3,4}+T_{2,3,4}} \\ \nonumber
&
=8\left(\frac{1}{R_{1,2,3}}+\frac{1}{R_{1,2,4}}+\frac{1}{R_{1,3,4}}+\frac{1}{R_{2,3,4}}\right)^{-1}.
\end{align}
The beamforming vectors are optimized separately to maximize the symmetric rate for each transmission interval. For each subset $\mathcal{S}$ the formulation is exactly the same as the one in \textit{Scenario 1}. The difference is that in this scenario we have potentially more antennas available ($L\geq 3$) allowing for further improved multicast beamforming performance. %

\subsection{Scenario 4: Simple Linear TX-RX strategy, \rev{$\alpha=3$ and $\beta=1$}}
In \textit{Scenarios 1--3}, each user is allocated with a number of parallel streams that need to be decoded using SIC receiver structure. %
In this example, in contrast, we consider the same setting as in \textit{Scenarios 2-3} with $L\geq3$, $K=4$, $N=4$, $M=1$ but no overlap is allowed among user groups served by multiple multicast messages transmitted in parallel\rev{, i.e., we use parameters $\beta=1$ and $\alpha=3$}. This leads to a simpler TX-RX strategy where all $6$ multicast streams introduced in \textit{Scenario 2} %
are delivered across three orthogonal time intervals/slots, instead of transmitting all in parallel as in~\eqref{eq:S2_signal_vector}. In time slots 1--3, the multicast beamforming vectors are generated as
$\mathbf{w}_{1,2}(A_2 \oplus B_1)+\mathbf{w}_{3,4}(C_4 \oplus D_3)$,
$\mathbf{w}_{1,3}(A_3 \oplus C_1)+\mathbf{w}_{2,4}(B_4 \oplus D_2)$
and
$\mathbf{w}_{1,4}(A_4 \oplus D_1)+\mathbf{w}_{2,3}(B_3 \oplus C_2)$,
respectively.%

In each time slot, all $4$ users are served with $2$ parallel multicast streams. Each stream causes inter-stream interference to $2$ other users not included in the given multicast group. Therefore, the BS, equipped at least with $3$ antennas, has enough spatial degrees of freedom to manage the inter-stream interference between multicast streams. 
The beamforming vectors are optimized separately to maximize the symmetric rate $R_C(i)$  for each transmission interval $i$.
Thus, the corresponding time to deliver the multicast messages containing $F/4$ fractions of the files in time slot $i$ is
$T(i)=\frac{F}{4}\frac{1}{R_C(i)}$.
Since these transmissions are done in $3$ different time slots, the overall \emph{Symmetric Rate Per User} of this scheme is
\begin{align}
\frac{F}{\sum_{i=1,2,3} T(i)} =4\left(\sum\nolimits_{i=1,2,3} R_C(i)^{-1}\right)^{-1}.
\end{align}
As will be shown in Section~\ref{sec:numexp}, the scheme provides the same overall DoF (slope) as the original scheme in \textit{Scenario 2}, but with a constant gap at high SNR due to simplified TX-RX processing.

As no overlap is allowed, each user decodes a single multicast message in a given time slot. Therefore, neither SIC receiver nor MAC rate region constraints are needed in the problem formulation unlike in \textit{Scenario 2}. As a result, the achievable rate is uniquely defined by the SINR of the received data stream.  Let us define $\gamma_C(i)$ to be the common symmetric SINR for all users served in time slot $i$ such that $R_C(i) = \log(1 + \gamma_C(i))$. 
The multigroup multicast beamformer optimization problem for $i$th timeslot can be then expressed as the following common SINR maximization problem:
\begin{equation}
\begin{array}{rl}
\label{prob:linear_problem}
\ds
\underset{\substack{\gamma_C(i), \M{w}_{\mathcal{T}}}}{\max} & 
\ds \ \gamma_C(i)
\\
\mathrm{s.\ t.} 
& \ds \gamma_C(i) \leq \frac{|\M{h}_k\herm\M{w}_{\mathcal{T}}(i)|^2}{|\M{h}_k\herm\M{w}_{\mathcal{\bar{T}}}(i)|^2 + N_0}, \\
&
\forall k \in \mathcal{T}, \mathcal{T} \in \mathsf{P}(i),  \mathcal{\bar{T}}  \in \mathsf{P}(i) \setminus \mathcal{T},
\\
& \sum_{\mathcal{T}\in \mathsf{P}(i)}
\|\M{w}_{\mathcal{T}}(i)\|^2 \leq \text{SNR}
\text{.}
\end{array}
\end{equation}
where $\mathsf{P}(1) = \{\{1,2\},\{3,4\}\}$, $\mathsf{P}(2) = \{\{1,3\},\{2,4\}\}$ and $\mathsf{P}(3) = \{\{1,4\},\{2,3\}\}$.
The resulting problem is a multi-group multicast beamforming for common SINR maximization and several solutions exist, for example via semidefinite relaxation (SDR) of beamformers and solving (iteratively via bisection) as a semidefinite program (SDP)~\cite{karipidis2008quality}. 
Here, instead, we adopt the SCA solution from ~\cite{Venkatraman-Tolli-Juntti-Tran-TSP17}, based on which~\eqref{prob:linear_problem} can be solved efficiently as a series of second order cone programs.  Unlike the SDP based designs, the SCA technique solves for beamformers directly, thereby avoiding the need for any randomization procedure if rank-1 beamformers are to be recovered from the SDR solutions~\cite{Venkatraman-Tolli-Juntti-Tran-TSP17}.

For example, by approximating the SINR constraints as in~\eqref{eq:sinr_constraint_before_sca}--\eqref{eq:sca}, the common SINR for time slot 1, $\gamma_C(1)$ can be solved (for a given approximation point $\bar{\M{w}}_{1,2}, \bar{\M{w}}_{3,4}, \bar{\gamma}_C(1)$ and by omitting the slot index $i$) as
\begin{equation}
\begin{array}{rl}
\label{prob:dlprob}
\ds
\underset{\substack{\gamma_C, \M{w}_{1,2},\M{w}_{3,4}}}{\max} & 
\ds
\gamma_C
\\
\mathrm{s.\ t.} 
&\ds \mathcal{L}({\M{w}_{1,2}, \M{w}_{3,4}, \M{h}_{1}, \gamma_C}) \geq 
|\M{h}_1\herm \M{w}_{3,4}|^2 + N_0,\\
&
\ds \mathcal{L}({\M{w}_{1,2}, \M{w}_{3,4}, \M{h}_{2}, \gamma_C}) \geq 
|\M{h}_2\herm \M{w}_{3,4}|^2 + N_0,
\\
&\ds \mathcal{L}({\M{w}_{3,4}, \M{w}_{1,2}, \M{h}_{3}, \gamma_C}) \geq 
|\M{h}_3\herm \M{w}_{1,2}|^2 + N_0,\\
&
\ds \mathcal{L}({\M{w}_{3,4}, \M{w}_{1,2}, \M{h}_{4}, \gamma_C}) \geq 
|\M{h}_4\herm \M{w}_{1,2}|^2 + N_0,
\\
& \ds \|\M{w}_{1,2}\|^2 + \|\M{w}_{3,4}\|^2 \leq \text{SNR}
\text{.}
\end{array}
\end{equation}
where $\mathcal{L}({\M{w}_{\mathcal{T}}, \M{w}_{\mathcal{\bar{T}}}, \M{h}_{k}, \gamma_C})$ is given in~\eqref{eq:sca}.

\section{General Case Formulation, Algorithm, and Rate Analysis}\label{sec:general}
\rev{As mentioned in Section~\ref{sec:multicast_bf_cc}, 
in the multiserver scheme~\cite{Shariatpanahi2017,Shariatpanahi-Caire-Khalaj-TIT18,Shariatpanahi2016}, each user $k$ will have to decode $m_k=\binom{t+L-1}{ t }$ different messages, which grows exponentially when $K$, $L$, $N$ are increased with the same ratio (linearly if $t=KM/N=1$).
Thus, the total number of rate constraints in the beamformer optimization problem is $(t+L) ( 2^{m_k} - 1 )$.} %
	For example, the case $L=4$, $K=5$, $N=5$ and $M=1$ would require altogether $ \binom52 =10$ multicast messages and each user should be able to decode 4 multicast messages. Thus, the total number of rate constraints would be $
	5\times 15$ while the number of SINR constraints to be approximated would be $5 \times 4$.
	As an efficient way to reduce the complexity of the problem both at the transmitter and the receivers (with a certain performance loss at high SNR), we may limit the size of user subsets benefiting from multicast messages transmitted in parallel as in \textit{Scenario 3} or limit the overlap among the multicast messages as in \textit{Scenario 4},  reflected in parameters $\alpha$ and $\beta$, respectively. %

In the following, the general algorithm for the delivery phase for any set of parameters
$K$, $L$, $N$ and $M$  in Algorithm~\ref{Alg_Main} is described. Let us first provide a light description of the algorithm. Algorithm's inputs contains library contents $W_1,\dots,W_N$, user requests indices $d_1,\dots,d_K$, and the channel gain matrix $\mathbf{H} = [\mathbf{h}_1, \ldots, \mathbf{h}_K]$. In addition, the algorithm takes two design parameters $\alpha$ and $\beta$ which should be tuned based on the working $SNR$, and the required complexity of the involved optimization problem. 

\rev{ The  cache content placement phase is similar to the scheme~\cite{MaddahAli-2014} introduced in Section~\ref{sec:multicast_bf_cc}, where each file is split into ${K \choose t}$ subfiles. The only difference is that here we further need to split each subfile in \cite{MaddahAli-2014} into smaller fragments }such that the total number of minifiles is 
\begin{equation}\label{eq:subpacketization}
{K \choose t} {K-t-1 \choose \alpha-1}\frac{(\alpha-1)!}{(\delta-1)!(\beta-1)!(t+\beta)!^{\delta-1}}
\end{equation}
where $\delta := \frac{t+\alpha}{t+\beta} \in \mathbb{N}$. This further splitting is needed in order to allow different content to be transmitted in each additional time interval introduced due to parameters $\alpha$ and $\beta$, similarly to \textit{Scenario 3}. \rev{More thorough justification for the second and third terms of~\eqref{eq:subpacketization} and their dependence of  $\alpha$ and $\beta$ is given in the latter part of this section. Note that~\eqref{eq:subpacketization} is reduced to~\cite{Shariatpanahi2016} if $\alpha=\beta=L$, and~\cite{MaddahAli-2014} if $\alpha=\beta=1$.}  

\rev{As shown in~\cite{Shariatpanahi2016,Shariatpanahi-Caire-Khalaj-TIT18}, coded caching/multicasting and spatial multiplexing gains are additive and the maximum DoF is upper bounded by $t+L$ (or by $\min(t+L,K)$ if $L>K-t$).}
In the generalized scheme, instead of fixing the size of the subsets $\{\mathcal{S} \subseteq [K]\}$ to be $\min(t+L,K)$ as in \textit{Scenario 2}, we introduce a new integer parameter $\alpha$ bounded by 
\begin{equation}\label{eq:alpha}
1 \leq \alpha \leq \min(L,K-t)
\end{equation} 
and define the size of subsets $\{\mathcal{S} \subseteq [K]\}$  to be \rev{$t+\alpha\leq t+\min(L,K-t)$.} The parameter $\alpha$ \rev{controls the available spatial multiplexing gain and} has two main roles. First, it manages the trade-off between the spatial multiplexing and multicast beamforming/diversity gains due to \rev{optimized use of} multiple transmit antennas, and thus should be designed carefully at each $SNR$ to result in the maximum throughput. Second, it enables us to control the size of the MAC channel elements with respect to each user, and in turn, to control the optimization problem complexity for determining the beamforming vectors, as will be explained later. This generalization reduces to the baseline max-min SNR beamforming scheme if $\alpha=1$ (see~\eqref{eq:CCmaxminSNR1} for $K=3$).%

Fig.~\ref{fig:cc-subset} illustrates a possible partitioning of users into sets $\{\mathcal{S} \subseteq [K]\}$ in a scenario with  $K=N=5$, $L=4$ and $M=1$, and with $|\mathcal{S}|=t+\alpha=\{2,3,4\}$ ($\alpha = \{1,2,3\}$). In total, there can be $T= {5 \choose 4} = 5$ , $T= {5 \choose 3} = 10$ and  $T= {5 \choose 2} = 10$ subsets of sizes $|\mathcal{S}|=4$, $|\mathcal{S}|=3$ and $|\mathcal{S}|=2$, respectively.
In this example, every subset in $\{\mathcal{S}, |\mathcal{S}|=3 \}$ or  $\{\mathcal{S}, |\mathcal{S}|=4 \}$ corresponds to \textit{Scenario 1} or \textit{Scenario 2}, respectively, and the optimal multicast beamformers can be found by solving~\eqref{prob:equal_rate_approx} or~\eqref{eq:symrate_4users} (for corresponding $k\in \mathcal{S}$).  
\begin{figure}
	\centering
	\includegraphics[width=\columnwidth]{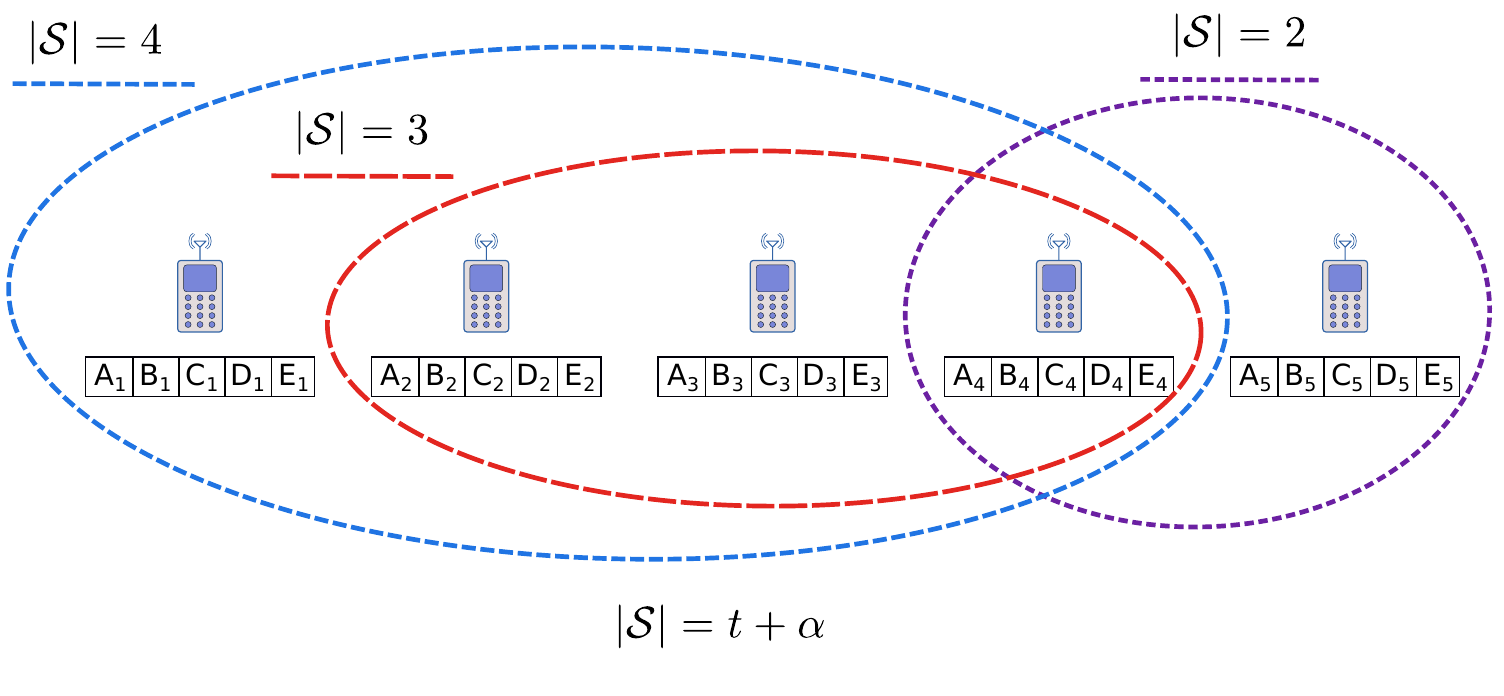}%
	\vspace{-0.3cm}	\caption{Example partitioning of users into sets $\{\mathcal{S} \subseteq [K]\}$ in a scenario with $K=N=5$,  $L=4$, $|\mathcal{S}| = t+\alpha=[2,3,4]$.}
	\label{fig:cc-subset}
\end{figure}

After the required initializations, the algorithm contains an outer loop which goes over all the $(t+\alpha)$-subsets of all the users $[K]$.
Let us now consider a scenario with $K=8$ users with $t=1$ and $\alpha=5$ depicted in Fig.~\ref{fig:cc-beta} and focus on one particular realization of these $(t+\alpha)$-subsets $\mathcal{S}=\{1,2,3,4,5,6\}$. For this specific set $\mathcal{S}$, the second loop goes over all possible partitionings of $\mathcal{S}$ into $(t+\beta)$-groups, which are collected in $\mathsf{P}$. Here, $\beta$, bounded by $1 \leq \beta \leq \alpha$, is another design parameter which controls the overlap among the multicast messages, i.e., the complexity of the beamformer design problem.
\begin{figure}
	\centering
	\includegraphics[width=\columnwidth]{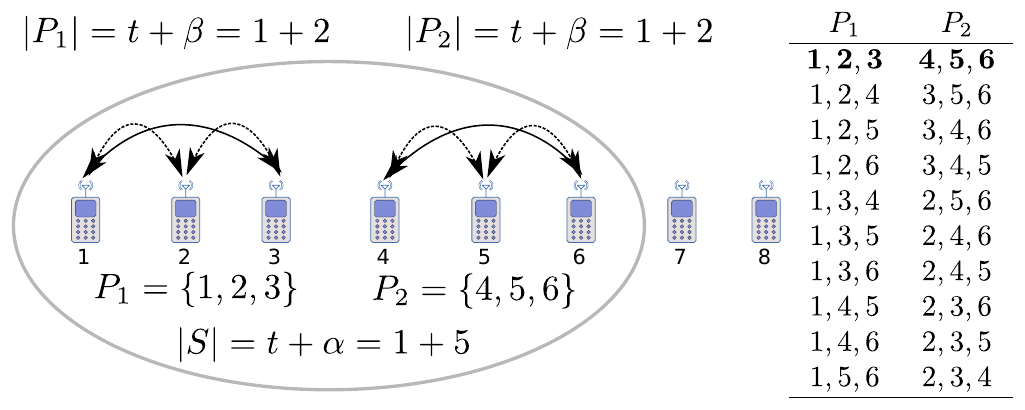}%
	\vspace{-0.3cm}	\caption{Example partitioning of users in $\mathcal{S}=\{1,2,3,4,5,6\}$  into $(t+\beta)$-groups in a scenario with $K=N=8$,  $L\geq 5$, $t=1$, $\alpha=5$ and $\beta=2$. }
	\label{fig:cc-beta}
\end{figure}

In the example of Fig.~\ref{fig:cc-beta}, $\beta=2$ is selected showing one possible partitioning of users in to $\delta=2$ groups $\mathsf{P}=\{\mathcal{P}_1, \mathcal{P}_2\}$ (of all $10$ possible partitionings shown also in the table), where $\mathcal{P}_1 = \{1,2,3\}$ and $\mathcal{P}_2 = \{4,5,6\}$.
Then, for a specific partitioning $\mathsf{P}$, we form the coded messages for all $(t+1)$-subsets of each group, for all the groups  $\mathcal{P}_i \in \mathsf{P}, i=1,...,\delta$. In this paper, the number of $t+\beta$ sets within each $t+\alpha$ set is restricted to integer values $\delta := \frac{t+\alpha}{t+\beta} \in \mathbb{N}$.\footnote{The generalization of $\delta$ is discussed in Remark 3.} In this example, there are $3$ parallel coded messages for every pair of users inside $\mathcal{P}_1$, and $3$ coded messages for every pair in $\mathcal{P}_2$, resulting in a total of $6$ coded messages. It should be noted that common messages for users in different groups are not allowed, which is the main ingredient behind controlling complexity of the beamformer design. In general, assuming $\delta \in \mathbb{N}$, there will be $\delta {t+\beta \choose t+1}$ coded messages involved for a fixed partitioning $\mathsf{P}$, while these $(t+1)$-subsets for multicast beamforming are collected in the collection of sets
 $\mathsf{\Omega}^{\mathcal{S},\mathsf{P}} := \bigcup_{i=1,\dots,\delta} \{\mathcal{T} \subseteq \mathcal{P}_i, |\mathcal{T}|=t+1\}$ for a specific $\mathcal{S}$ and $\mathsf{P}$.  The transmit vector $\underline{\mathbf{X}}(\mathcal{S},\mathsf{P})$ consists of all these coded messages multiplied by their corresponding beamformers.  
In the example shown in Fig.~\ref{fig:cc-beta}, the transmit vector $\underline{\mathbf{X}}(\mathcal{S},\mathsf{P})$ for $\mathcal{S}=\{1,2,3,4,5,6\}$,  $\mathcal{P}_1 = \{1,2,3\}$ and $\mathcal{P}_2 = \{4,5,6\}$ is generated as
 \begin{align}
\underline{\mathbf{X}}&(\mathcal{S},\mathsf{P}) = \mathbf{w}_{1,2}^{\mathcal{S},\mathcal{P}_1} \tilde{X}_{1,2}^{\mathcal{S},\mathcal{P}_1} + \mathbf{w}_{1,3}^{\mathcal{S},\mathcal{P}_1} \tilde{X}_{1,3}^{\mathcal{S},\mathcal{P}_1}+\mathbf{w}_{2,3}^{\mathcal{S},\mathcal{P}_1} \tilde{X}_{2,3}^{\mathcal{S},\mathcal{P}_1} \nonumber \\ 
&+ \mathbf{w}_{4,5}^{\mathcal{S},\mathcal{P}_2} \tilde{X}_{4,5}^{\mathcal{S},\mathcal{P}_2} + \mathbf{w}_{5,6}^{\mathcal{S},\mathcal{P}_2} \tilde{X}_{5,6}^{\mathcal{S},\mathcal{P}_2}+\mathbf{w}_{4,6}^{\mathcal{S},\mathcal{P}_2} \tilde{X}_{4,6}^{\mathcal{S},\mathcal{P}_2}.
 \end{align}
Note that here $\beta$ can control the number of coded messages aimed at each user. For example, if we allow $\beta=\alpha=5$, then there will be a total of ${6 \choose 2}=15$ coded messages transmitted in parallel, of which every user would need to decode $5$. By contrast, in the example scenario for $\beta=2$, there are in total $6$ parallel coded messages of which every user needs to decode~$2$. 
 
Finally, the beamformers are optimized to deliver each coded message to its intended users at the highest common rate, considering interference from other terms as well as noise. The optimum beamformers are denoted by $\{\mathbf{w}_\mathcal{T}^{\mathcal{S}, \mathsf{P}}, \mathcal{T} \in \mathsf{\Omega}^{\mathcal{S},\mathsf{P}}\}^*$ for a specific partitioning $\mathsf{P}$ of the set $\mathcal{S}$. The inner loop in the algorithm (line 8) ensures that the above procedure is repeated for all possible partitionings of a given $\mathcal{S}$ in a TDMA manner (for example in Fig.~\ref{fig:cc-beta}, all the $10$ possible partitionings in the table should be considered), and finally the outer loop repeats this process for all possible $(t+\alpha)$-subsets $\mathcal{S}$.

\begin{algorithm*}
	\caption{Interference aware Multi-Antenna Coded Caching\label{Alg_Main}}
	\begin{algorithmic}[1]
			\Procedure{DELIVERY}{$W_1,\dots,W_N$, $d_1,\dots,d_K$, $\mathbf{H} = [\mathbf{h}_1, \ldots, \mathbf{h}_K]$, $\alpha$, $\beta$}
			\State $t \gets MK/N$, \quad
			$\delta \gets \frac{t+\alpha}{t+\beta} \in \mathbb{N}$			%
			\ForAll{$\mathcal{S} \subseteq [K], |\mathcal{S}|=t+\alpha$}
			\ForAll{$\mathsf{P}=\{\mathcal{P}_i\}_{i=1,\dots,\delta}$: $\dot\bigcup_{i=1,\dots,\delta} \mathcal{P}_i=\mathcal{S}, |\mathcal{P}_i|=t+\beta$} \quad \quad %
			
			\State $\mathsf{\Omega}^{\mathcal{S},\mathsf{P}} \gets \bigcup_{i=1,\dots,\delta} \{\mathcal{T} \subseteq \mathcal{P}_i, |\mathcal{T}|=t+1\}$ 
			
			\ForAll{$\mathcal{T} \in \mathsf{\Omega}^{\mathcal{S},\mathsf{P}}$}
			\State $X_\mathcal{T} \gets \oplus_{k \in \mathcal{T}} 		
			\mathrm{NEW}(W_{{d_k},\mathcal{T}\backslash\{k\}}) $ \quad \quad \quad %
			\EndFor
			\State $\{\mathbf{w}_\mathcal{T}^{\mathcal{S}, \mathsf{P}}, \mathcal{T} \in \mathsf{\Omega}^{\mathcal{S},\mathsf{P}}\}^*=\arg  \max\limits_{\{\mathbf{w}_\mathcal{T}^\mathcal{S}, \mathcal{T} \in \mathsf{\Omega}^{\mathcal{S},\mathsf{P}}\}}   \min_{k \in \mathcal{S}} R^k_{MAC}\left(\mathcal{S},\mathsf{P},\{\mathbf{w}_\mathcal{T}^{\mathcal{S},\mathsf{P}}, \mathcal{T} \in \mathsf{\Omega}^{\mathcal{S},\mathsf{P}}\}\right)$  %
			\State $\underline{\mathbf{X}}(\mathcal{S},\mathsf{P}) \gets  \sum_{\mathcal{T} \in \mathsf{\Omega}^{\mathcal{S},\mathsf{P}}} \mathbf{w}_\mathcal{T}^{\mathcal{S},\mathsf{P}} \tilde{X}_\mathcal{T}$

			\State \textbf{transmit} $\underline{\mathbf{X}}(\mathcal{S},\mathsf{P})$ with the rate $\min_{k \in \mathcal{S}} R^k_{MAC}\left(\mathcal{S},\mathsf{P},\{\mathbf{w}_\mathcal{T}^{\mathcal{S},\mathsf{P}}, \mathcal{T} \in \mathsf{\Omega}^{\mathcal{S},\mathsf{P}}\}^*\right)$ 
			
			\EndFor
			\EndFor
			\EndProcedure
	\end{algorithmic}
\end{algorithm*}

The following theorem characterizes the achievable delivery rate of this algorithm. A detailed analysis of the algorithm elements, and the corresponding performance analysis is provided in the proof that follows. 
\begin{thm}\label{Th_sumrate_general}
	Algorithm \ref{Alg_Main} will result in the following symmetric rate

		\begin{equation}\label{eq:SymRateTh}
	R_{\mathrm{sym}}= \frac{F}{ 
		\sum_{\substack{\mathcal{S} \subseteq [K] \\ |\mathcal{S}|=t+\alpha}}  \sum_{\substack{\mathsf{P}=\{\mathcal{P}_i\} \\ \dot\bigcup \mathcal{P}_i=\mathcal{S} \\ |\mathcal{P}_i|=t+\beta}}  T_C^*(\mathcal{S},\mathsf{P})}
	\end{equation}
	where $T_C^*(\mathcal{S},\mathsf{P})$ is the optimized transmission time to the subset $\mathcal{S}$ for a specific partitioning $\mathsf{P}$. The outer sum is over all possible $t+\alpha$ sets $\mathcal{S}$ while the inner sum collects all disjoint unions $\dot\bigcup \mathcal{P}_i$ of $\mathcal{S}$ such that $|\mathcal{P}_i|=t+\beta$, and given $\delta := \frac{t+\alpha}{t+\beta} \in \mathbb{N}$. Each $T_C^*(\mathcal{S},\mathsf{P})$ is optimized over a set of multicast beamformers $\mathbf{w}_\mathcal{T}^{\mathcal{S},\mathsf{P}}, \mathcal{T} \in \mathsf{\Omega}^{\mathcal{S},\mathsf{P}}$
	\begin{align}\label{Eq:MulticastTime}&
T_C^*(\mathcal{S},\mathsf{P}) =\frac{F}{{K \choose t}{K-t-1 \choose \alpha-1} \frac{(\alpha-1)!}{(\delta-1)!(\beta-1)! (t+\beta)!^{\delta-1}}} \nonumber \\ 
&
\min_{\substack{\{\mathbf{w}_\mathcal{T}^{\mathcal{S},\mathsf{P}},  \mathcal{T} \in \mathsf{\Omega}^{\mathcal{S},\mathsf{P}},\\ \sum\limits_{ \mathcal{T} \in \mathsf{\Omega}^{\mathcal{S},\mathsf{P}}}\!\!\!\!{\|\mathbf{w}_\mathcal{T}^{\mathcal{S},\mathsf{P}}\|^2} \leq SNR\}}} \!\!\!\!\!%
\left[\min_{k \in \mathcal{S}} R^k_{MAC}\left(\mathcal{S},\mathsf{P},\{\mathbf{w}_\mathcal{T}^{\mathcal{S},\mathsf{P}} %
\}\right)\right]^{-1} %
\end{align}	
	where $R^k_{MAC}$ is the generalized stream specific rate expression for user $k$ and given as
	\begin{align}\label{Eq:MAC_general}
	&R^k_{MAC}\left(\mathcal{S},\mathsf{P},\{\mathbf{w}_\mathcal{T}^{\mathcal{S},\mathsf{P}}, \mathcal{T} \in \mathsf{\Omega}^{\mathcal{S},\mathsf{P}}\}\right) \nonumber \\ &
	= %
	\min_{\mathsf{B} \subseteq \mathsf{\Omega}_k^{\mathcal{S},\mathsf{P}}} \left[\frac{1}{|\mathsf{B}|}\log\left(1+\sum_{\mathcal{T} \in \mathsf{B}} \gamma_\mathcal{T}^k (\mathcal{S},\mathsf{P}) \right) \right]
	\end{align}
	and where
	\begin{equation}\label{eq:SINR_general}
	\gamma_\mathcal{T}^k(\mathcal{S},\mathsf{P})=\frac{|\mathbf{h}_k\herm \mathbf{w}_\mathcal{T}^{\mathcal{S},\mathsf{P}}|^2}{N_0+\sum_{\mathcal{\bar{T}} \in \bar{\mathsf{\Omega}}_k^{\mathcal{S},\mathsf{P}}} |\mathbf{h}_k\herm \mathbf{w}_\mathcal{\bar{T}}^{\mathcal{S},\mathsf{P}}|^2}
	\end{equation}
	and
	\begin{align}
	\mathsf{\Omega}^{\mathcal{S},\mathsf{P}} &
	:= \bigcup_{i=1,\dots,\delta} \{\mathcal{T} \subseteq \mathcal{P}_i, |\mathcal{T}|=t+1\}, \nonumber \\
	\mathsf{\Omega}_k^{\mathcal{S},\mathsf{P}} &
	:= \{\mathcal{T} \in \mathsf{\Omega}^{\mathcal{S},\mathsf{P}} | \ k \in \mathcal{T}\}, \ %
	\bar{\mathsf{\Omega}}_k^{\mathcal{S},\mathsf{P}} %
	:= \mathsf{\Omega}^{\mathcal{S},\mathsf{P}} \backslash \mathsf{\Omega}_k^{\mathcal{S},\mathsf{P}}.
	\end{align}
The SINR expressions in~\eqref{eq:SINR_general} are non-convex, and hence, they need to be relaxed and approximated in a successive manner, similarly to~\eqref{eq:sinr_constraint_before_sca}--\eqref{eq:sca}. First, \eqref{eq:SINR_general} is relaxed as 
\begin{align}\label{eq:sinr_general_before_sca}
 N_0+\sum\nolimits_{\mathcal{\bar{T}} \in \bar{\mathsf{\Omega}}_k^{\mathcal{S},\mathsf{P}}} |\mathbf{h}_k\herm \mathbf{w}_\mathcal{\bar{T}}^{\mathcal{S},\mathsf{P}}|^2 &\leq 
\frac{\sum_{\mathcal{\bar{T}} \in \mathcal{T} \bigcup \bar{\mathsf{\Omega}}_k^{\mathcal{S},\mathsf{P}}} %
|\mathbf{h}_k\herm \mathbf{w}_\mathcal{\bar{T}}^{\mathcal{S},\mathsf{P}}|^2 + N_0}
{1 +\gamma_\mathcal{T}^k(\mathcal{S},\mathsf{P})}
\end{align}
Now, the R.H.S of~\eqref{eq:sinr_general_before_sca} is a convex quadratic-over-linear function and it can be linearly approximated and lower bounded as (sets $\mathcal{S},\mathsf{P}$ omitted)
\begin{align}
	\label{eq:sca_general}
	\ds \mathcal{L}(\M{w}_{\mathcal{T}},& \M{w}_{\mathcal{\bar{T}}}, \M{h}_{k}, \gamma_\mathcal{T}^k) \triangleq \Bigg(
	\ds \sum_{\mathcal{\bar{T}} \in \mathcal{T} \bigcup \bar{\mathsf{\Omega}}_k^{\mathcal{S},\mathsf{P}}} \!\!\!\!\!\!\!\! |\M{h}_k\herm \bar{\M{w}}_{\mathcal{\bar{T}}}|^2 + N_0  \\ &
	\ds - 2 \!\!\!\!\!\!\!\!\sum_{\mathcal{\bar{T}} \in \mathcal{T} \bigcup \bar{\mathsf{\Omega}}_k^{\mathcal{S},\mathsf{P}}} \!\!\!\!\!\!\!\!  \mathbb{R}\left(
	\bar{\M{w}}_{\mathcal{\bar{T}}}\herm \M{h}_k \M{h}_k\herm 
	\left(\M{w}_{\mathcal{\bar{T}}} - \bar{\M{w}}_{\mathcal{\bar{T}}}\right)
	\right)  \nonumber \\
	&+ \frac{\!\sum_{\mathcal{\bar{T}} \in \mathcal{T} \bigcup \bar{\mathsf{\Omega}}_k^{\mathcal{S},\mathsf{P}}} %
	|\M{h}_k\herm \bar{\M{w}}_{\mathcal{\bar{T}}}|^2 + N_0}{1 + \bar{\gamma}_\mathcal{T}^k} 
	\left(\gamma_\mathcal{T}^k- \bar{\gamma}_\mathcal{T}^k\right) \rev{\Bigg) \frac{1}{1 + \bar{\gamma}_\mathcal{T}^k}} \nonumber
\end{align}
where $\bar{\M{w}}_{\mathcal{T}}$ and $\bar{\gamma}_\mathcal{T}^k$ denote the fixed values (points of approximation) for the corresponding variables from the previous iteration.

\end{thm}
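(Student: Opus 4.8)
The plan is to establish, in order, the four components of the statement: (a) that the minifile subpacketization is consistent with Algorithm~\ref{Alg_Main} and that the delivery is complete --- every missing minifile of every requested file is transmitted exactly once; (b) that within a single interval $(\mathcal{S},\mathsf{P})$ each user $k\in\mathcal{S}$ faces the Gaussian MAC encoded by~\eqref{eq:SINR_general} and can sustain the symmetric per-stream rate~\eqref{Eq:MAC_general}; (c) that the interval length is~\eqref{Eq:MulticastTime}; and (d) that summing the interval lengths over all $(\mathcal{S},\mathsf{P})$ gives the delivery time in~\eqref{eq:SymRateTh}. The closing SCA claim~\eqref{eq:sinr_general_before_sca}--\eqref{eq:sca_general} is then a verbatim generalization of~\eqref{eq:sinr_constraint_before_sca}--\eqref{eq:sca}.

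For (a): by the placement of~\cite{MaddahAli-2014} each file splits into $\binom{K}{t}$ subfiles $W_{n,\mathcal{Q}}$, $|\mathcal{Q}|=t$, with user $k$ caching exactly those with $\mathcal{Q}\ni k$; hence $W_{d_k,\mathcal{Q}}$ is missing at user $k$ iff $k\notin\mathcal{Q}$, and in the XOR $X_\mathcal{T}=\oplus_{j\in\mathcal{T}}(\cdot\,W_{d_j,\mathcal{T}\setminus\{j\}})$ this subfile appears only for $\mathcal{T}=\mathcal{Q}\cup\{k\}$, where any $k\in\mathcal{T}$ strips off every other term from its cache (taking, w.l.o.g., the generic case of distinct requests as in~\cite{MaddahAli-2014}). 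Completeness thus reduces to a counting claim: a fixed $(t+1)$-set $\mathcal{T}$ occurs in $\bigcup_{\mathcal{S},\mathsf{P}}\mathsf{\Omega}^{\mathcal{S},\mathsf{P}}$ with multiplicity $M_\mathcal{T}=\binom{K-t-1}{\alpha-1}\binom{\alpha-1}{\beta-1}\frac{(\alpha-\beta)!}{(\delta-1)!\,[(t+\beta)!]^{\delta-1}}$, obtained by choosing $\mathcal{S}\supseteq\mathcal{T}$ with $|\mathcal{S}|=t+\alpha$ in $\binom{K-t-1}{\alpha-1}$ ways, selecting the $\beta-1$ further members of the block of $\mathsf{P}$ that hosts $\mathcal{T}$ among the $\alpha-1$ users of $\mathcal{S}\setminus\mathcal{T}$ in $\binom{\alpha-1}{\beta-1}$ ways, and partitioning the remaining $(t+\beta)(\delta-1)=\alpha-\beta$ users of $\mathcal{S}$ into $\delta-1$ unordered $(t+\beta)$-blocks; using $\binom{\alpha-1}{\beta-1}(\alpha-\beta)!=(\alpha-1)!/(\beta-1)!$ this equals $\binom{K-t-1}{\alpha-1}\frac{(\alpha-1)!}{(\delta-1)!(\beta-1)!\,[(t+\beta)!]^{\delta-1}}$, independent of which users comprise $\mathcal{T}$. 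Splitting every subfile into this many minifiles and feeding them to $\mathrm{NEW}(\cdot)$ makes the algorithm well defined, delivers each minifile exactly once, and produces a total of $\binom{K}{t}M_\mathcal{T}$ minifiles per file --- precisely the normalizing factor in~\eqref{Eq:MulticastTime}. I expect this counting step to be the crux, since it simultaneously pins down completeness and the subpacketization/normalization constant; the bookkeeping of which block of $\mathsf{P}$ carries a given $\mathcal{T}$, together with the constraint $\delta=\tfrac{t+\alpha}{t+\beta}\in\mathbb{N}$, is where care is needed.

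For (b), I would reuse the Gaussian-MAC argument of Scenarios~1--2. Fix $(\mathcal{S},\mathsf{P})$ and $k\in\mathcal{S}$; the received signal decomposes into the desired streams $\{\tilde X_\mathcal{T}:\mathcal{T}\in\mathsf{\Omega}_k^{\mathcal{S},\mathsf{P}}\}$ --- the $(t+1)$-subsets of $k$'s block containing $k$, of which there are $\binom{t+\beta-1}{t}$ --- and the remaining streams $\bar{\mathsf{\Omega}}_k^{\mathcal{S},\mathsf{P}}$, which are mutually independent unit-power complex Gaussians and hence absorbable into an effective noise of power $N_0+\sum_{\mathcal{\bar T}\in\bar{\mathsf{\Omega}}_k^{\mathcal{S},\mathsf{P}}}|\M{h}_k\herm\M{w}_{\mathcal{\bar T}}^{\mathcal{S},\mathsf{P}}|^2$; the desired part is then a Gaussian MAC with per-stream SNRs $\gamma_\mathcal{T}^k$ as in~\eqref{eq:SINR_general}, whose polymatroid capacity region is $\sum_{\mathcal{T}\in\mathsf{B}}R_\mathcal{T}\le\log(1+\sum_{\mathcal{T}\in\mathsf{B}}\gamma_\mathcal{T}^k)$ for all $\mathsf{B}\subseteq\mathsf{\Omega}_k^{\mathcal{S},\mathsf{P}}$. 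Imposing a common per-stream rate (so that a user completes all the messages it must decode at once), feasibility of $R_\mathcal{T}\equiv\rho$ is $|\mathsf{B}|\rho\le\log(1+\sum_{\mathcal{T}\in\mathsf{B}}\gamma_\mathcal{T}^k)$ for every $\mathsf{B}$, so the largest admissible $\rho$ is~\eqref{Eq:MAC_general}, attained by time-sharing the corner points of the polymatroid (one successive-decoding order per corner), exactly as noted after~\eqref{prob:equal_rate_socp}.

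Finally, (c)--(d) and the SCA step. Each user in $\mathcal{S}$ decodes $\binom{t+\beta-1}{t}$ fresh minifiles of its own file during interval $(\mathcal{S},\mathsf{P})$, all carried at rate $R^k_{MAC}$, so the interval lasts $(\text{minifile size})/\min_{k\in\mathcal{S}}R^k_{MAC}$; substituting the minifile size $F/(\binom{K}{t}M_\mathcal{T})$ from (a) and optimizing the beamformers subject to $\sum_{\mathcal{T}\in\mathsf{\Omega}^{\mathcal{S},\mathsf{P}}}\|\M{w}_\mathcal{T}^{\mathcal{S},\mathsf{P}}\|^2\le\mathrm{SNR}$ yields~\eqref{Eq:MulticastTime}. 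Since distinct $(\mathcal{S},\mathsf{P})$ occupy orthogonal slots (the two loops of Algorithm~\ref{Alg_Main}), the overall delivery time is the double sum in~\eqref{eq:SymRateTh} and $R_{\mathrm{sym}}=F/(\text{total time})$ follows. For the last assertion, clearing denominators shows~\eqref{eq:sinr_general_before_sca} is equivalent to the SINR constraint behind~\eqref{eq:SINR_general}, but its right-hand side is now quadratic-over-linear in $(\M{w},\gamma_\mathcal{T}^k)$ and hence jointly convex; replacing that right-hand side by its first-order Taylor minorant about the previous iterate $(\bar{\M{w}},\bar{\gamma}_\mathcal{T}^k)$ gives the affine lower bound~\eqref{eq:sca_general}, so the constraint $\mathcal{L}\ge N_0+\sum_{\mathcal{\bar T}\in\bar{\mathsf{\Omega}}_k^{\mathcal{S},\mathsf{P}}}|\M{h}_k\herm\M{w}_{\mathcal{\bar T}}^{\mathcal{S},\mathsf{P}}|^2$ is a convex inner approximation that is tight at the current iterate, and monotone convergence of the induced SCA iteration follows exactly as for~\eqref{eq:sinr_constraint_before_sca}--\eqref{eq:sca}.
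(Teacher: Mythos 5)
Your proposal is correct and follows essentially the same route as the paper's proof: subpacketization into $\Gamma$ mini-files per subfile, the per-user Gaussian MAC with interference treated as noise giving the symmetric rate \eqref{Eq:MAC_general}, the per-interval time \eqref{Eq:MulticastTime}, and TDMA summation over all $(\mathcal{S},\mathsf{P})$. The one place you go beyond the paper is the explicit combinatorial derivation of the multiplicity $\binom{K-t-1}{\alpha-1}\binom{\alpha-1}{\beta-1}\frac{(\alpha-\beta)!}{(\delta-1)!\,[(t+\beta)!]^{\delta-1}}=\Gamma$ of each $(t+1)$-subset, which the paper only asserts as ``easily verified''; your count is correct and usefully pins down the normalization constant.
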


Before going to the proof of Theorem~\ref{Th_sumrate_general}, let us revisit the simple scenarios introduced in Section~\ref{sec:multicast_bf_cc} and relate each of them to the generic algorithm above. By inserting the parameters listed below into~\eqref{eq:SymRateTh}--\eqref{Eq:MAC_general}, the corresponding scenario specific symmetric rate expressions given in Section~\ref{sec:multicast_bf_cc} can be recovered. 
\begin{itemize}
	\item \textit{Scenario 1}: $\alpha=2$, $\beta=2$, $\delta=1$, $\mathcal{S}=\mathsf{P}=\{1,2,3\}$
	\item \textit{Scenario 2}: $\alpha=3$, $\beta=3$, $\delta=1$,   $\mathcal{S}=\mathsf{P}=\{1,2,3,4\}$
	\item \textit{Scenario 3}: $\alpha=2$, $\beta=2$, $\delta=1$, $\mathcal{S}=\mathsf{P} \subset [4],|\mathcal{S}|=\alpha+1=3 $
	\item \textit{Scenario 4}: $\alpha=3$, $\beta=1$, $\delta=2$, $\mathcal{S}=\{1,2,3,4\}$, $\mathsf{P}(1)=\{\mathcal{P}_1(1), \mathcal{P}_2(1)\} = \{\{1,2\},\{3,4\}\}$, $\mathsf{P}(2) = \{\{1,3\},\{2,4\}\}$ and $\mathsf{P}(3) = \{\{1,4\},\{2,3\}\}$
\end{itemize}

The proof of Theorem \ref{Th_sumrate_general} is given in the following.

\begin{proof}

	In the cache content placement phase, each file is divided into ${K \choose t}$ subfiles as follows
	\begin{equation}
	W_n=\{W_{n,\tau}, \tau \subset [K], |\tau|=t\}, 
	\end{equation}
	and each subfile is further divided into mini-files
	\begin{equation}
	W_{n,\tau}=\{W_{n,\tau}^j, j=1,\dots, \Gamma \} 
	\end{equation}	
	where
	\begin{equation}\label{eq:Gamma}
	\Gamma={K-t-1 \choose \alpha-1}\frac{(\alpha-1)!}{(\delta-1)!(\beta-1)!(t+\beta)!^{\delta-1}}.
	\end{equation}
	
In the original coded caching scheme of \cite{MaddahAli-2014}, there are ${K \choose t+1}$ coded messages (called \emph{coded sub-files}, each of size equal to a sub-file) which should be delivered to all $(t+1)$-subsets of users $[K]$, i.e., $X_\mathcal{T} := \oplus_{k \in \mathcal{T}} 		W_{{d_k},\mathcal{T}\backslash\{k\}}$ should be delivered to all members of $\mathcal{T}$ for all $\mathcal{T} \subseteq [K], |\mathcal{T}|=t+1$. Since in our construction (inner and outer loops in Algorithm \ref{Alg_Main}, each $(t+1)$-subset appears multiple times, we need to transmit smaller coded messages (called \emph{coded mini-files}, each of size equal to a mini-file) in each appearance, which ensures that delivering each coded mini-file provides the targeted users with \emph{fresh} (not transmitted before) mini-files they require. This is the main reason behind  dividing each subfile into $\Gamma$ mini-files. In order to do this, we define the operator $\mathrm{NEW}(.)$ which when operated on each sub-file returns the next \emph{fresh} mini-file of that sub-file, which then will be used in forming coded mini-files. More specifically we have 
	\begin{equation}
	\mathrm{NEW}(W_{n,\tau})=W^{j+1}_{n,\tau}
	\end{equation}
	if the last application of $\mathrm{NEW}$ on the sub file $W_{n,\tau}$ had returned $W^{j}_{n,\tau}$. Next, we describe how these tasks are fulfilled with the help of multi-antenna interference management.
	
	Let us focus on a specific $(t+\alpha)$-subset of the users, namely $\mathcal{S}$, and a specific partitiong of this subset, namely $\mathsf{P}=\{\mathcal{P}_i\}_{i=1,\dots,\delta}$: $\dot\bigcup_{i=1,\dots,\delta} \mathcal{P}_i=\mathcal{S}, |\mathcal{P}_i|=t+\beta$. Then, $\mathsf{\Omega}^{\mathcal{S},\mathsf{P}}$ is the collection of all $(t+1)$-subsets of $\mathcal{S}$, such that each subset is contained inside a group $\mathcal{P}_i$ of the partition. Then, sum of coded mini-files of these $(t+1)$-subsets with the corresponding beamformers will be transmitted to users in $\mathcal{S}$ in the form of the transmit signal 
	\begin{equation}
	\underline{\mathbf{X}}(\mathcal{S},\mathsf{P}) \gets \ds \sum_{\mathcal{T} \in \mathsf{\Omega}^{\mathcal{S},\mathsf{P}}} \mathbf{w}_\mathcal{T}^{\mathcal{S},\mathsf{P}} \tilde{X}_\mathcal{T}^{\mathcal{S},\mathsf{P}}
	\end{equation}
	where $\tilde{X}_\mathcal{T}^{\mathcal{S},\mathsf{P}}$ is ensured to be a coded mini-file combined of fresh mini-files for each involved user. Assume that all the involved coded mini-files are successfully received at their intended users. %
	Then, all the subsets $\mathcal{T} \in \mathsf{\Omega}^{\mathcal{S},\mathsf{P}}$ will receive one coded mini-file, containing a fresh mini-file for each user in $\mathcal{T}$. It can be easily verified that, if we go over all the possible $(t+\alpha)$-subsets and their corresponding partitionings, each $(t+1)$-subset of $[K]$ will appear $\Gamma$
	times (given in \eqref{eq:Gamma}), and due to the appropriate mini-file indexing, each user will be able to decode a fresh mini-file in each transmission shot. Thus, these coded mini-files constitute the whole coded subfile. As this is true for all the $(t+1)$-subset of $[K]$, all the original tasks of \cite{MaddahAli-2014} are fulfilled.
	
	It just remains to be proven that by transmitting $\underline{\mathbf{X}}(\mathcal{S},\mathsf{P})$ with the rate stated in Theorem \ref{Th_sumrate_general}, all the users in $\mathcal{S}$ will be able to decode their desired coded mini-files. Consider a user $k \in \mathcal{S}$, which happens to be in the group $\mathcal{P}_i$ of the partitioning $\mathsf{P}$. Then, it is clear that this user will be interested in the coded mini-files $X_{\mathcal{T}}^{N_{\mathcal{T}}}$ such that $\mathcal{T} \in \mathsf{\Omega}_k^{\mathcal{S},\mathsf{P}}$, and all the remaining coded mini-files $X_{\mathcal{T}}^{N_{\mathcal{T}}}, \mathcal{T} \in \bar{\mathsf{\Omega}}_k^{\mathcal{S},\mathsf{P}}$ will appear as interference to this user. Thus, this user faces a Gaussian MAC with $|\mathsf{\Omega}_k^{\mathcal{S},\mathsf{P}}|$ desired terms, $|\bar{\mathsf{\Omega}}_k^{\mathcal{S},\mathsf{P}}|$ interference terms, and a noise term. Clearly, by restricting the transmission rate to the achievable Gaussian  MAC rate in \eqref{Eq:MAC_general}, this user can decode all the desired terms with an equal rate. Since we are transmitting the common message of size $\frac{F}{{K \choose t}\Gamma }$ to the users in $\mathcal{S}$ at the rate of the worst user, all of them will be able to decode the file within the minimum delivery time given in \eqref{Eq:MulticastTime}. %
	
	Finally, since each user decodes one requested file at the end, the symmetric (per-user) rate of the proposed scheme will be $R_{\mathrm{sym}} = F/T$ where the total time $T$ can be derived as
	\begin{equation}
	T=\sum_{\substack{\mathcal{S} \subseteq [K] \\ |\mathcal{S}|=t+\alpha}}  \sum_{\substack{\mathsf{P}=\{\mathcal{P}_i\} \\ \dot\bigcup \mathcal{P}_i=\mathcal{S} \\ |\mathcal{P}_i|=t+\beta}}  T_C^*(\mathcal{S},\mathsf{P}) 
	\end{equation}	
	where $T_C^*(\mathcal{S},\mathsf{P})$ is the transmission time for the given subset $\mathcal{S}$ and partitioning $\mathsf{P}$. %
\end{proof}

The following Degrees of Freedom (DoF) analysis of the proposed scheme shows that the DoF only depends on $\alpha$ and it is independent of $\beta$. \rev{By choosing $\alpha=L$, we achieve a DoF shown to be order-optimal among one-shot linear schemes in \cite{Naderalizadeh2017} (For an information theoretic optimality analysis based on interference alignment techniques see \cite{Hachem2016}).}
\begin{cor}
		The DoF of the rate derived in the above theorem is 
		\begin{equation}
		DoF=\frac{t+\alpha}{K-t}=\frac{KM/N+\alpha}{K(1-M/N)}.
		\end{equation}
	\end{cor}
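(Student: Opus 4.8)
The plan is to compute the DoF by taking the high-SNR limit of the symmetric rate $R_{\mathrm{sym}}$ given in~\eqref{eq:SymRateTh}, where $\mathrm{DoF} := \lim_{\mathrm{SNR}\to\infty} R_{\mathrm{sym}}/\log(\mathrm{SNR})$. The key observation is that the total delivery time $T$ is a sum of terms $T_C^*(\mathcal{S},\mathsf{P})$, each of which scales like $1/\log(\mathrm{SNR})$ at high SNR, provided every stream-specific MAC rate $R^k_{MAC}$ grows like $c\log(\mathrm{SNR})$ for some constant $c>0$ (a ``pre-log'' or multiplexing gain). So the whole argument reduces to (i) counting how many terms appear in the double sum in~\eqref{eq:SymRateTh}, (ii) identifying the per-subset mini-file size, and (iii) determining the pre-log of $\min_{k\in\mathcal{S}} R^k_{MAC}$ for a single transmission $\underline{\mathbf{X}}(\mathcal{S},\mathsf{P})$.

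First I would handle step (iii), which I expect to be the crux. For a fixed $\mathcal{S}$ and partitioning $\mathsf{P}$, each user $k\in\mathcal{P}_i$ sees a Gaussian MAC with $|\mathsf{\Omega}_k^{\mathcal{S},\mathsf{P}}| = \binom{t+\beta-1}{t} = t+\beta-1 \choose t$ desired streams and $|\bar{\mathsf{\Omega}}_k^{\mathcal{S},\mathsf{P}}|$ interference streams. The interference streams belong either to other groups $\mathcal{P}_j$, $j\neq i$, or to subsets $\mathcal{T}\subseteq\mathcal{P}_i$ with $k\notin\mathcal{T}$. Since the transmitter has $L\geq\alpha$ antennas and, for the DoF-optimal choice, $\alpha = \min(L, K-t)$, the total number of streams $|\mathsf{\Omega}^{\mathcal{S},\mathsf{P}}| = \delta\binom{t+\beta}{t+1}$ together with the $t+\alpha$ users in $\mathcal{S}$ can be handled so that zero-forcing of all unwanted streams at each user is feasible (this is exactly the feasibility condition behind~\cite{Shariatpanahi2017,Shariatpanahi-Caire-Khalaj-TIT18}, and one can also simply invoke the ZF beamformer choice from Scenario~1 generalized). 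Then at high SNR each $\gamma_\mathcal{T}^k$ behaves like $\mathrm{SNR}$ up to a constant, every rate bound $\frac{1}{|\mathsf{B}|}\log(1+\sum_{\mathcal{T}\in\mathsf{B}}\gamma_\mathcal{T}^k)$ grows like $\frac{1}{|\mathsf{B}|}\log(\mathrm{SNR})$, so $R^k_{MAC} \doteq \frac{1}{|\mathsf{\Omega}_k^{\mathcal{S},\mathsf{P}}|}\log(\mathrm{SNR})$ and $\min_{k\in\mathcal{S}}R^k_{MAC} \doteq \frac{1}{\binom{t+\beta-1}{t}}\log(\mathrm{SNR})$, where $\doteq$ denotes equality of pre-logs. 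Crucially $\alpha$ never enters this per-transmission pre-log except through the feasibility of interference nulling; it is $\beta$ (via the MAC size) that does.

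Next, I would assemble the pieces. From~\eqref{Eq:MulticastTime}, the high-SNR scaling is
\begin{equation}\nonumber
T_C^*(\mathcal{S},\mathsf{P}) \doteq \frac{F}{\binom{K}{t}\binom{K-t-1}{\alpha-1}\frac{(\alpha-1)!}{(\delta-1)!(\beta-1)!(t+\beta)!^{\delta-1}}} \cdot \frac{\binom{t+\beta-1}{t}}{\log(\mathrm{SNR})}.
\end{equation}
The number of subsets $\mathcal{S}$ is $\binom{K}{t+\alpha}$, and the number of partitionings $\mathsf{P}$ of each $\mathcal{S}$ into $\delta$ groups of size $t+\beta$ is $\frac{(t+\alpha)!}{\delta!\,((t+\beta)!)^{\delta}}$ (the multinomial count of ordered partitions divided by $\delta!$ for unordered groups). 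Multiplying these counts by $T_C^*$, the factorials and binomials must telescope: $\binom{K}{t+\alpha}\binom{K}{t}\binom{K-t-1}{\alpha-1}$ and the partition count combine with $\binom{t+\beta-1}{t}$ so that, after cancellation, $R_{\mathrm{sym}} = F/T \doteq \frac{t+\alpha}{K-t}\log(\mathrm{SNR})$, yielding $\mathrm{DoF} = \frac{t+\alpha}{K-t} = \frac{KM/N+\alpha}{K(1-M/N)}$ as claimed, with all $\beta$- and $\delta$-dependence cancelling.

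The main obstacle I anticipate is the bookkeeping in this last cancellation: verifying that the product of the combinatorial prefactors (the mini-file count $\Gamma$ from~\eqref{eq:Gamma}, the subfile count $\binom{K}{t}$, the number of $\mathcal{S}$'s, and the number of $\mathsf{P}$'s) collapses exactly to $\binom{K}{t+1}\cdot\frac{K-t}{t+1}$-type expressions so the pre-log is clean. A useful sanity check, which I would carry out, is to re-derive the known cases from the bullet list — Scenario~1 ($\alpha=\beta=2$, giving $\mathrm{DoF}=(1+2)/2$), Scenario~4 ($\alpha=3,\beta=1,\delta=2$, giving the same $(1+3)/3$ as Scenario~2 with $\alpha=3,\beta=3$) — since these confirm the $\beta$-independence concretely and pin down any stray factor. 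The only genuinely non-routine point, already flagged above, is justifying that the interference-nulling pre-log of $1$ per desired stream is attainable for all admissible $(\alpha,\beta)$ with $\alpha\le L$; for this I would lean directly on the feasibility established in~\cite{Shariatpanahi-Caire-Khalaj-TIT18} for the underlying multi-server / ZF construction, of which the present scheme is a strict generalization in its beamformer freedom but not in its DoF.
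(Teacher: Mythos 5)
Your proposal follows essentially the same route as the paper's proof: take the high-SNR limit of $R_{\mathrm{sym}}=F/T$, count the $\binom{K}{t+\alpha}$ subsets and $\frac{(t+\alpha)!}{\delta!\,(t+\beta)!^{\delta}}$ partitionings, substitute the mini-file normalization from~\eqref{Eq:MulticastTime}, and use the pre-log $1/\binom{t+\beta-1}{t}$ of the per-user MAC rate, after which the combinatorial factors telescope to $\frac{t+\alpha}{K-t}$ (your telescoping claim is correct). Your explicit appeal to ZF feasibility to justify that each $\gamma_{\mathcal{T}}^k$ scales linearly in SNR is a welcome extra step that the paper's proof leaves implicit, but it does not change the argument's structure.
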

	\begin{proof}
		DoF is defined as
		\begin{align}\label{eq:DoF}
		DoF&=\lim_{SNR \rightarrow \infty} \frac{R_{\mathrm{sym}}}{\log SNR} \nonumber \\ &
		=\frac{F}{ 
			\ds \sum_{\substack{\mathcal{S} \subseteq [K] \\ |\mathcal{S}|=t+\alpha}}  \sum_{\substack{\mathsf{P}=\{\mathcal{P}_i\} \\ \dot\bigcup \mathcal{P}_i=\mathcal{S} \\ |\mathcal{P}_i|=t+\beta}}  \lim_{SNR \rightarrow \infty}\left(\log SNR \times T_C^*(\mathcal{S},\mathsf{P})\right)}\nonumber \\ 
		&\stackrel{(a)}=\frac{F}{ {K \choose t+\alpha} \frac{(t+\alpha)!}{\delta! (t+\beta)!^{\delta}} \ds \lim_{SNR \rightarrow \infty}\left(\log SNR \times T_C^*(\mathcal{S},\mathsf{P})\right)}\nonumber \\ 	
		&\stackrel{(b)}=\frac{{K \choose t}{K-t-1 \choose \alpha-1} \frac{(\alpha-1)!}{(\delta-1)!(\beta-1)! (t+\beta)!^{\delta-1}}}{ {K \choose t+\alpha} \frac{(t+\alpha)!}{\delta! (t+\beta)!^{\delta}} } \nonumber \\ 	
		& \qquad
		\lim_{SNR \rightarrow \infty}\frac{R^k_{MAC}\left(\mathcal{S},\mathsf{P},\{\mathbf{w}_\mathcal{T}^{\mathcal{S},\mathsf{P}}, \mathcal{T} \in \mathsf{\Omega}^{\mathcal{S},\mathsf{P}}\}\right)}{\log SNR} \nonumber \\ 
		&\stackrel{(c)}=\frac{{K \choose t}{K-t-1 \choose \alpha-1} \frac{(\alpha-1)!}{(\delta-1)!(\beta-1)! (t+\beta)!^{\delta-1}}}{ {K \choose t+\alpha} \frac{(t+\alpha)!}{\delta! (t+\beta)!^{\delta}} {t+\beta-1 \choose t}} \nonumber \\ 	
		&
		=\frac{t+\alpha}{K-t}=\frac{KM/N+\alpha}{K(1-M/N)}. 
		\end{align}
		where $(a)$ is due to that fact that the number of terms in the inner and outer summations are $\frac{(t+\alpha)!}{\delta! (t+\beta)!^{\delta}}$ and ${K \choose t+\alpha}$ respectively, and since $ \lim_{SNR \rightarrow \infty}\left(\log SNR \times T_C^*(\mathcal{S},\mathsf{P})\right)$ does not depend on particular $\mathcal{S}$ and $\mathsf{P}$, $(a)$ is valid for any of $\mathcal{S}$ and $\mathsf{P}$ indexed in the summations. Also $(b)$ follows from \eqref{Eq:MulticastTime} and $(c)$ is due to the fact that
		\begin{align}
		& \lim_{SNR \rightarrow \infty}\frac{R^k_{MAC}\left(\mathcal{S},\mathsf{P},\{\mathbf{w}_\mathcal{T}^{\mathcal{S},\mathsf{P}}, \mathcal{T} \in \mathsf{\Omega}^{\mathcal{S},\mathsf{P}}\}\right)}{\log SNR}\nonumber \\ 
		&=\lim_{SNR \rightarrow \infty}\min_{\mathsf{B} \subseteq \mathsf{\Omega}_k^{\mathcal{S},\mathsf{P}}} \left[\frac{1}{|\mathsf{B}|}\frac{\log\left(1+\sum_{\mathcal{T} \in \mathsf{B}} \gamma_\mathcal{T}^k (\mathcal{S},\mathsf{P})\right)}{\log SNR} \right] \nonumber \\ 	
		&
		= \frac{1}{\ds \max_{\mathsf{B} \subseteq \mathsf{\Omega}_k^{\mathcal{S},\mathsf{P}}} |\mathsf{B}|} %
		= \frac{1}{{t+\beta-1 \choose t}} \nonumber
		\end{align}
		which concludes the proof.
\end{proof}

Also, we characterize the results in \cite{Shariatpanahi2017} and the max-min SNR beamforming baseline scheme as a special cases of Theorem \ref{Th_sumrate_general} in the following remark.

\begin{remark}
	In Theorem \ref{Th_sumrate_general}, if we set $\alpha=\beta=\min(L,K-t)$ and the beamforming vectors are chosen based on the zero forcing principle, the interference terms vanish, and it reduces to the results of \cite{Shariatpanahi2017}. Furthermore, if we set $\alpha=\beta=1$, the result reduces to the baseline maxmin SNR beamforming scheme.
\end{remark}
Moreover, the complexity of the optimization problem is characterized in the following remark.
\begin{remark}
    All of the constraints involved can be rewritten as second-order cones (SOCs). The SINR and transmit power constraints are readily in SOC form. However, the MAC sum rate constraints involving exponents (as seen, e.g., in~\eqref{prob:equal_rate_socp}) require some additional steps for the complete SOC formulation~\cite{Alizadeh01second-ordercone}. In the general case, the complexity of the beamformer design \eqref{eq:SymRateTh} is largely dominated by the number of simultaneously transmitted messages, that is, the partitioning size $|\mathsf{\Omega}_k^{\mathcal{S},\mathsf{P}}| = {t+\beta-1 \choose t}$. The number of MAC rate region constraints increases exponentially with $\beta + t$. However, the size of each SOC constraint involved with the MAC region is fairly small. On the other hand, the complexity of the SINR constraints scales quadratically with $\alpha + 1$ and $L$~\cite{Lobo-Vandenberghe-Boyd-Lebret-98}. It should be noted that, the beamformer design can be split into ${K \choose \alpha + t}$ parallel problems, which greatly improves the optimization latency and individual problem complexity as $\alpha$ is decreased.
    The receiver complexity is mostly affected by parameter $\beta$, i.e., whether or not SIC is needed. From the receiver perspective, $\beta>1$ indicates the number of desired multicast messages decoded at each user using the SIC receiver structure.
\end{remark}
\begin{remark}
	The above discussion is for parameter values such that $t+\beta$ divides $t+\alpha$. In general, one can vary $\alpha$ and $\beta$ such that this condition holds true, however if it is not possible to ensure, a readily available option is always to set $\beta=\alpha$, which by choosing $\alpha=L$ will achieve full DoF. For other cases where $t+\beta$ does not divide $t+\alpha$, extending the above techniques to arrive at satisfactory finite-SNR performance is challenging due to the asymmetries arising in the combinatorial nature of the problem.  Therefore, this problem can be posed as an interesting topic for further research.
\end{remark}

\section{Numerical Examples}
\label{sec:numexp}
The numerical examples are generated for various combinations of parameters $L,K,N,M$ and $|\mathcal{S}|$, including \emph{Scenarios 1 -- 4}. 
The channels are considered to be i.i.d. complex Gaussian. The average performance
is attained over $500$ independent channel realizations. The SNR is defined as
$\frac{P}{N_0}$, where $P$ is the power budget and $N_0 = 1$ is the fixed noise floor. All the Matlab codes are available online at \url{https://github.com/kalesan/sim-cc-miso-bc}.

\begin{figure}
	\centering
	\includegraphics[width=\columnwidth]{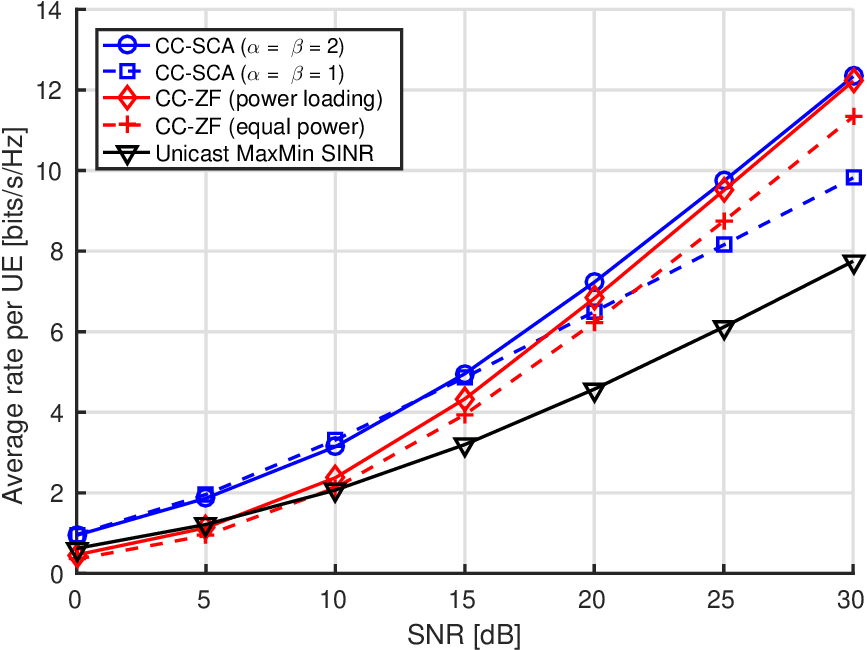}
	\vspace{-0.3cm}	\caption{\textit{Scenario 1}: $L=2$, $K=3$, $N=3$ and $M=1$.}
	\label{fig:K3}
\end{figure}
Fig.~\ref{fig:K3} shows the performance of the interference coordination with
CC in \emph{Scenario 1}, with $K=3$ users and $L=2$ antennas.  
It can be seen that the proposed CC multicast beamforming scheme via SCA, denoted as CC-SCA, achieves $3-5$ dB gain at low SNR as compared to the ZF with equal power loading~\cite{Shariatpanahi2017}. At high SNR, the ZF with optimal power loading in~\eqref{prob:equal_rate_ZF} achieves comparable performance while other schemes have significant performance gap. At low SNR regime, the simple MaxMin SNR multicasting with CC (labelled as 'CC-SCA ($\alpha=\beta=1$)') has similar performance as the CC-SCA scheme with full overlap between multicast streams ($\alpha=\beta=2$ ). This is due to the fact that, at low SNR, an efficient strategy for beamforming is to concentrate all available power to a single (multicast) stream at a time and to serve different users/streams in TDMA fashion. %
Due to simultaneous global CC gain and inter-stream interference handling, both CC-SCA and CC-ZF schemes achieve an additional DoF, which was already shown (for high SNR) in~\cite{Shariatpanahi2016, Shariatpanahi2017}.  The unicasting scheme does not perform well in this scenario as it does not utilize the global caching gain (only the local cache).%

In Fig.~\ref{fig:K3-L3}, the number of transmit antennas is increased to $L
= 3$.  This provides more than $3$dB additional gain for the \rev{CC-SCA} at
low SNR, when compared to the $L = 2$ antenna scenario, while the DoF is the same for
all the compared schemes. The optimal ZF multicast beamformer solution is no longer trivial, as the additional
antenna makes the interference free signal space two-dimensional for the ZF schemes.
A heuristic solution is
used where orthogonal projection is first employed
to get interference free signal space and then the strongest eigenvector of the stacked user channel matrix, projected to the null space, is used to get a sufficiently good direction within the interference free signal space. It can be seen that the ZF
scheme does achieve the same DoF as \rev{CC-SCA} method, but there is a constant
performance gap at high SNR. Interestingly, the \rev{CC-SCA} scheme with $L=2$
antennas has better performance than MaxMin SINR unicast with $L = 3$ antennas.
Both schemes have the same DoF, but the global caching gain is more beneficial than the additional spatial DoF of the unicast method.
\begin{figure}
	\centering
	\includegraphics[width=\columnwidth]{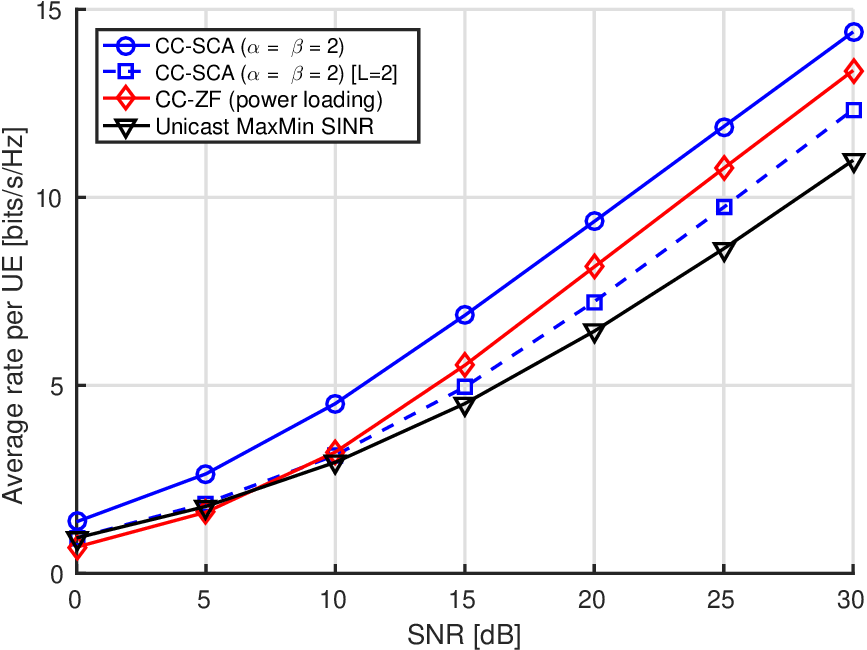}
	\vspace{-0.3cm}		\caption{\textit{Scenario 1}: $L=3$, $K=3$, $N=3$ and $M=1$.  }
	\label{fig:K3-L3}
\end{figure}

The performance of different schemes introduced in \emph{Scenarios 2--4} are illustrated in Fig.~\ref{fig:K4}. The CC-ZF scheme is not shown in Fig.~\ref{fig:K4}, but it is noted that the CC-SCA achieves $5-7$dB gain at low SNR, which is considerably more than in the less complex \emph{Scenario 1}. At high SNR, however, the CC-ZF with optimal power loading provides a comparable performance. %
\begin{figure}
	\centering
	\includegraphics[width=\columnwidth]{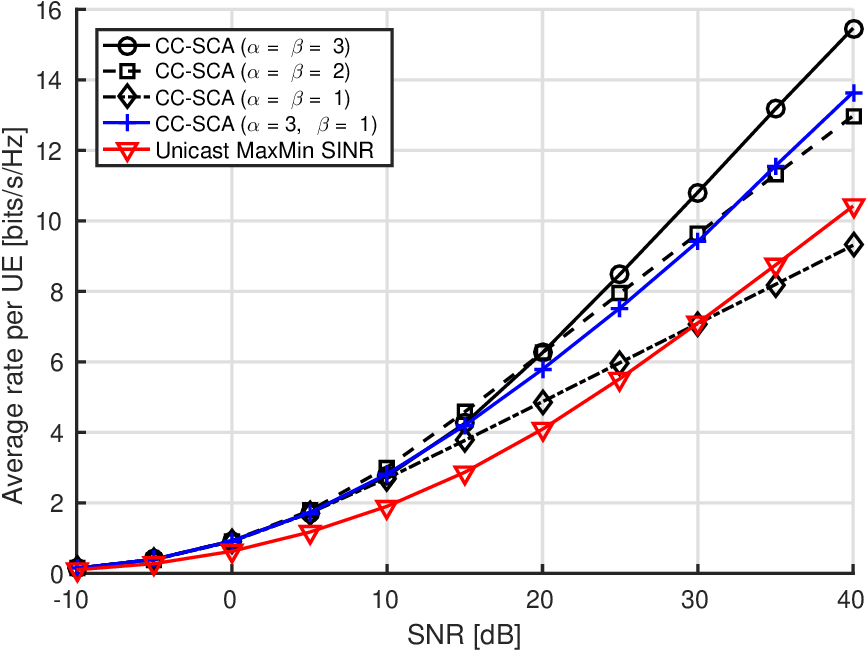}
	\vspace{-0.3cm}		\caption{Impact of parameters $\alpha$ and $\beta$ in \textit{Scenarios 2-4}: $L=3$, $K=N=4$, and $M=1$. }
	\label{fig:K4}
\end{figure}
The performance of the reduced subset method for $K=4$, $L = 3$ (\textit{Scenario 3}) are also shown in Fig.~\ref{fig:K4}. %
Similar to Fig.~\ref{fig:K3}, $\alpha=\beta=1 (|\mathcal{S}|=2)$ provides comparable performance at the low SNR regime. Interestingly, as the SNR is increased, $|\mathcal{S}|=3$ ($\alpha=\beta=2$) subset size outperforms the case with $|\mathcal{S}|=4$ ($\alpha=\beta=3$). %
Again, at lower SNR, it is better to focus the available power to fewer multicast streams transmitted in parallel. This will reduce the inter-stream interference and, at the same time, provide increased spatial degrees of freedom for multicast beamformer design.  All distinct user subsets $\mathcal{S} \subseteq [K]$ are served then in TDMA fashion. %

Fig.~\ref{fig:K4} illustrates also the performance of the proposed simple linear TX-RX multicasting scheme introduced in \textit{Scenario 4}.
The linear scheme labeled as 'CC-SCA ($\alpha=3$, $\beta=1$)' is able to serve 4 users simultaneously in each time slot with 3 antennas. Thus, it can provide the same degrees of freedom ($= 4$) at high SNR as the baseline CC-SCA scheme as well as its zero forcing (ZF) variant. However, there is about 3dB power penalty at high SNR due to less optimal TX-RX processing, but it still greatly outperforms the unicast reference case.

Fig.~\ref{fig:cc-subset} illustrates some subset selection possibilities for $K = 5$. For a six user ($K=6$) scenario shown in Fig.~\ref{fig:K6-subset}, there are four possible subset sizes $|\mathcal{S}| = [2, 3, 4,5]$ that can be used to reduce the serving set size for multicast transmission in $\mathcal{S}$. 
From Fig.~\ref{fig:K6-subset}, we can observe again that, by reducing the subset size to $|\mathcal{S}| = 5$ or $4$, the average symmetric rate per user can be even improved at medium SNR as compared to the case where all users are served simultaneously, i.e., $|\mathcal{S}|=6$. At high SNR region, however, the reduced subset cases become highly suboptimal as the spatial DoF for transmitting parallel streams is limited by $\alpha$. The high SNR slope for each curve in Fig.~\ref{fig:K6-subset} is equivalent to the user specific DoF given in~\eqref{eq:DoF}, ranging from $\frac{2}{5}$ ($\alpha=1$) to $\frac{6}{5}$ ($\alpha=5$).
From complexity reduction perspective, the multicast mode with the smallest subset size providing close to optimal performance should be selected. In Fig.~\ref{fig:K6-subset}, for example, subset sizes $|\mathcal{S}|=3$, $|\mathcal{S}|=4$, $|\mathcal{S}|=5$ could be used  up to 0 dB, 10 dB and 30 dB, respectively, for optimal performance-complexity trade-off.

%
%
%
%
%

	\begin{figure}
		\centering
		\includegraphics[width=\columnwidth]{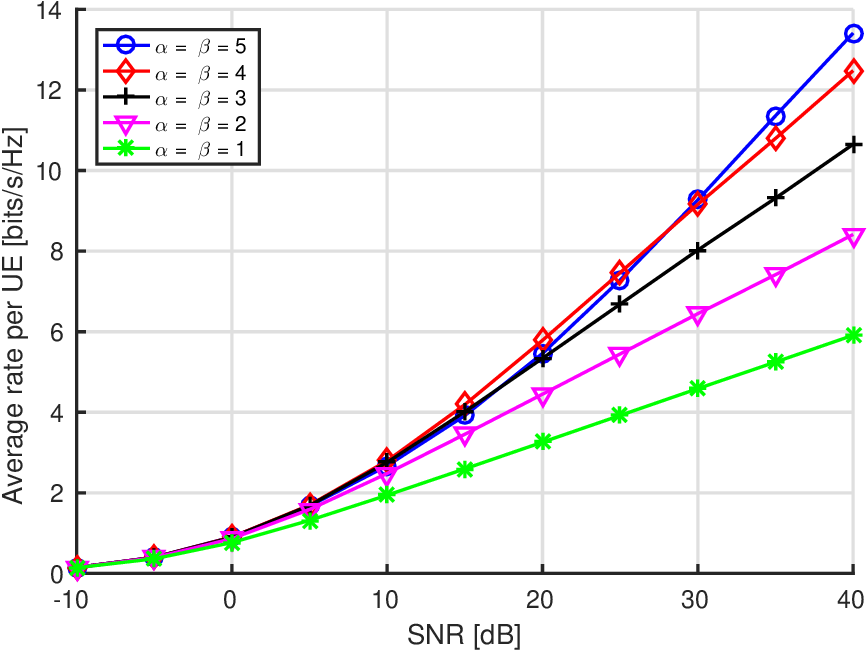}
			\caption{CC-SCA performance with $K=6$, $L=5$, $|S|=\alpha+1=[2,3,4,5,6]$.}\label{fig:K6-subset}
	\end{figure}

In Fig.~\ref{fig:K6-beta-subset}, the impact of parameter $\beta=[1,2,5]$ controlling the overlap among the parallel multicast messages is assessed with a fixed $\alpha=5$ for both SCA and ZF methods. The CC-ZF plots are generated by imposing zero-interference constraint similarly to \eqref{prob:equal_rate_ZF}.
The results with $\alpha=5$ and $\beta=1$ represent the case with no overlap, and hence, SIC is not required at the receivers. The multicast transmission is split into in total 15 time slots to cover all disjoint unions $\dot\bigcup \mathcal{T}$ of $\mathcal{S}=\{1,2,3,4,5,6\}$ such that $|\mathcal{T}|=t+1=2$. Furthermore, each subfile is split into $3$ mini-files in order to allow different contents to be transmitted in each subset $\mathsf{P}_i$. This is due to the fact that each user index pair $\mathcal{T}$ (e.g. $\mathcal{T}=\{1,2\}$) is repeated 3 times in distinct $\mathsf{P}_i$. In each time slot, all 6 users are served with 3 multicast streams transmitted in parallel. Thus, the BS, equipped at least with $5$ antennas, is able to manage the inter-stream interference between multicast streams. One spatial degree of freedom is used for delivering the multicast message to a user pair $\mathcal{T}$ while four degrees of freedom are needed to control the interference towards users $\mathcal{\bar{T}}  \in \mathsf{P}(i) \setminus \mathcal{T}$.
The case labeled as 'CC-SCA ($\alpha=5$, $\beta=2$) is an intermediate case between the linear scheme ($\beta=1$) and the fully overlapping case ($\beta=5$), allowing partial overlap among multicast messages transmitted in parallel. All possible 10 partitionings of size $t+\beta=3$ served in a TDMA fashion are shown in Fig.~\ref{fig:cc-beta}. In this case, each subfile must be further split into $4$ minifiles as each user index pair gets repeated in 4 different subsets $\mathcal{P}_i$

 \rev{The results in Fig.~\ref{fig:K6-beta-subset} verify the DoF analysis of Corollary 1 where the asymptotic DoF (slope) is shown to be independent of $\beta$ at high SNR region. However, the lower complexity  $\beta=1$ case in Fig.~\ref{fig:K6-beta-subset} suffers from $5$dB SNR penalty, which in turn can be alleviated by using a higher overlap ($\beta=2$) among parallel multicast streams. 
In general, a non-linear SIC structure has to be used at the receiver for $\beta>1$ in order to decode multiple parallel streams at each user.  In Fig. 8, the case with $\beta=L=5$ has the highest degree of flexibility for beamformer and power allocation since all $15$ multicast streams are transmitted in parallel. On the other hand, the optimization space for $\beta<5$ is much more constrained. For example, only $3$ multicast messages are sent in parallel in each transmit interval for $\beta=1$ and the SIC receiver is not needed at all. This 
translates to a constant penalty at high SNR when using $\beta<L$.
	
At low SNR, the performance loss from using highly suboptimal ZF criterion can be more than 10 dB at low SNR. Similarly, more than $150\%$ rate gain can be achieved by using the CC-SCA design at 10 dB SNR, for example. Furthermore, the performance impact of $\beta$ diminishes as the inter-stream interference is no longer dominant over the noise and all $\beta$ parametrizations provide almost identical performance. 
Asymptotically, the optimal transmit beamformers are reduced to SNR maximizing multicast beamformers, which can be simply obtained as weighted superposition of conjugate beamformers matching to the channels of each $t+1$-sized user subset $\mathcal{T}$. Consequently, the multicast beamformers specific to given $\mathcal{T}$ are essentially equivalent for any $\beta$ as they become independent of the inter-stream interference. Due to the maxmin rate objective, the beamformers are weighted such that the received signal level would be the same for all $k \in \mathcal{T}$. As a result, the effective channel gains $\mathbf{h}_k\herm\mathbf{w}_\mathcal{T}$ in the given symmetric scenario (equal path loss for all users) with $L=5$ antennas become (nearly) equal for all $k$ and $\mathcal{T}$.   
	 From the receiver perspective, this scenario resembles a classical symmetric downlink channel scenario where the superposition coding and SIC do not give any rate gain in comparison to a simple TDMA strategy~\cite[Section 6.2.1]{Tse-Viswanath-05}.
In general, $\beta$ should be selected as small as possible, since there is only a minor impact in the performance and, at the same time, there is a significant reduction in the computational complexity.}

	\begin{figure}
		\centering
		\includegraphics[width=\columnwidth]{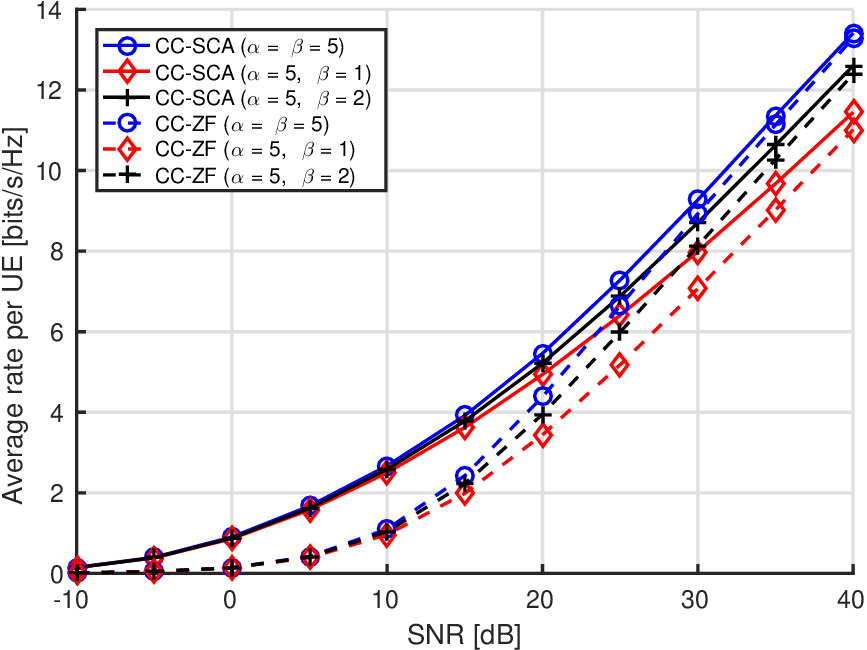}
			\caption{CC-SCA performance with $K=6$, $L=5$, $\alpha=5$ ($|S|=6$), $\beta = [1,2,5]$.} \label{fig:K6-beta-subset}
	\end{figure} 

\section{Conclusions}
Multicasting opportunities provided by caching at user terminal were utilized to devise an efficient multiantenna transmission with CC. General multicast beamforming strategies for content delivery with any values of the problem parameters, i.e., the number of users $K$, library size $N$, cache size $M$, and number of antennas $L$, size of the user subset $t+\alpha$, and the overlap among the multicast messages $\beta$ were employed, optimally balancing the detrimental impact of both noise and inter-stream interference from coded messages transmitted in parallel. %
Furthermore, the DoF was shown to only depend on $\alpha$ while being independent of $\beta$.
The schemes were shown to perform significantly better than several base-line schemes over the entire SNR region.

\bibliographystyle{IEEEtran}
\bibliography{IEEEabrv,conf_short,refs}

\begin{IEEEbiography}[{\includegraphics[width=1in,height=1.25in,keepaspectratio]{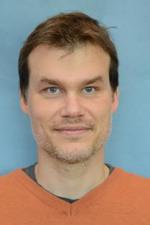}}]{Antti T\"olli}
    (M'08, SM'14) received the Dr.Sc. (Tech.) degree in electrical engineering from the University of Oulu, Oulu, Finland, in 2008. Before joining the Centre for Wireless Communications (CWC) at the University of Oulu, he worked for 5 years with Nokia Networks as a Research Engineer and Project Manager both in Finland and Spain. Currently, he holds an Associate Professor position with the University of Oulu. In May 2014, he was granted a five year (2014-2019) Academy Research Fellow post by the Academy of Finland. During the academic year 2015-2016, he visited at EURECOM, Sophia Antipolis, France, while from August 2018 to June 2019 he visited University of California - Santa Barbara, USA. He has authored numerous papers in peer-reviewed international journals and conferences and several patents all in the area of signal processing and wireless communications. His research interests include radio resource management and transceiver design for broadband wireless communications with a special emphasis on distributed interference management in heterogeneous wireless networks. He is currently serving as an Associate Editor for IEEE Transactions on Signal Processing.
\end{IEEEbiography}

\begin{IEEEbiography}[{\includegraphics[width=1in,height=1.25in,keepaspectratio]{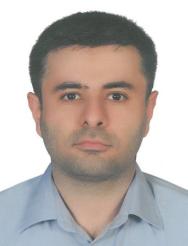}}]{Seyed Pooya Shariatpanahi} received the B.Sc., M.Sc., and Ph.D. degrees from the Department of Electrical Engineering, Sharif University of Technology, Tehran, Iran, in 2006, 2008, and 2013, respectively. Currently, he is an assistant professor at the School of Electrical and Computer Engineering at University of Tehran. Before joining University of Tehran, he was a researcher with the Institute for Research in Fundamental Sciences (IPM), Tehran, Iran. His research interests include information theory, network science, wireless communications, and complex systems. He was a recipient of the Gold Medal at the National Physics Olympiad in 2001.
\end{IEEEbiography}

\begin{IEEEbiography}[{\includegraphics[width=1in,height=1.25in,clip,keepaspectratio]{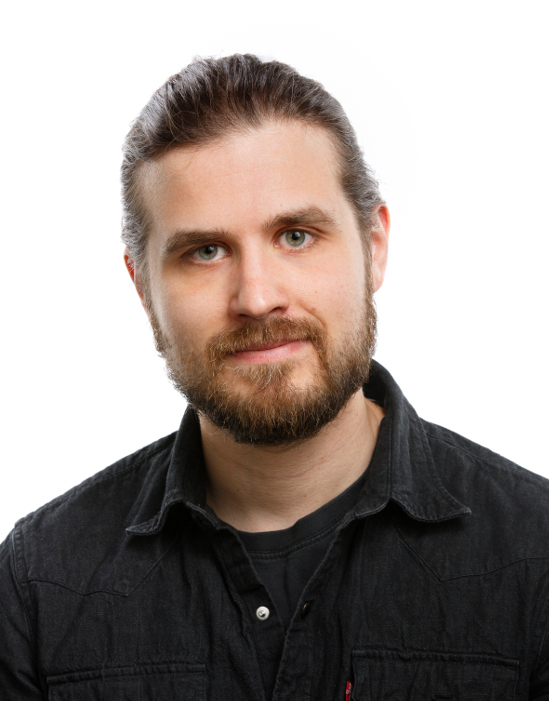}}]{Jarkko Kaleva}
    (S'11-M'18) received his Dr.Sc. (Tech.) degree in communications engineering from University of Oulu, Oulu, Finland in 2018 with distinction. In 2010, he joined Centre for Wireless Communications (CWC) at University of Oulu, Finland. He is co-founder of Solmu Technologies, where he is working as the chief software architect. His main research interests are in deep learning, structural analysis and nonlinear programming.
\end{IEEEbiography}

\begin{IEEEbiography}[{\includegraphics[width=1in,height=1.25in,clip,keepaspectratio]{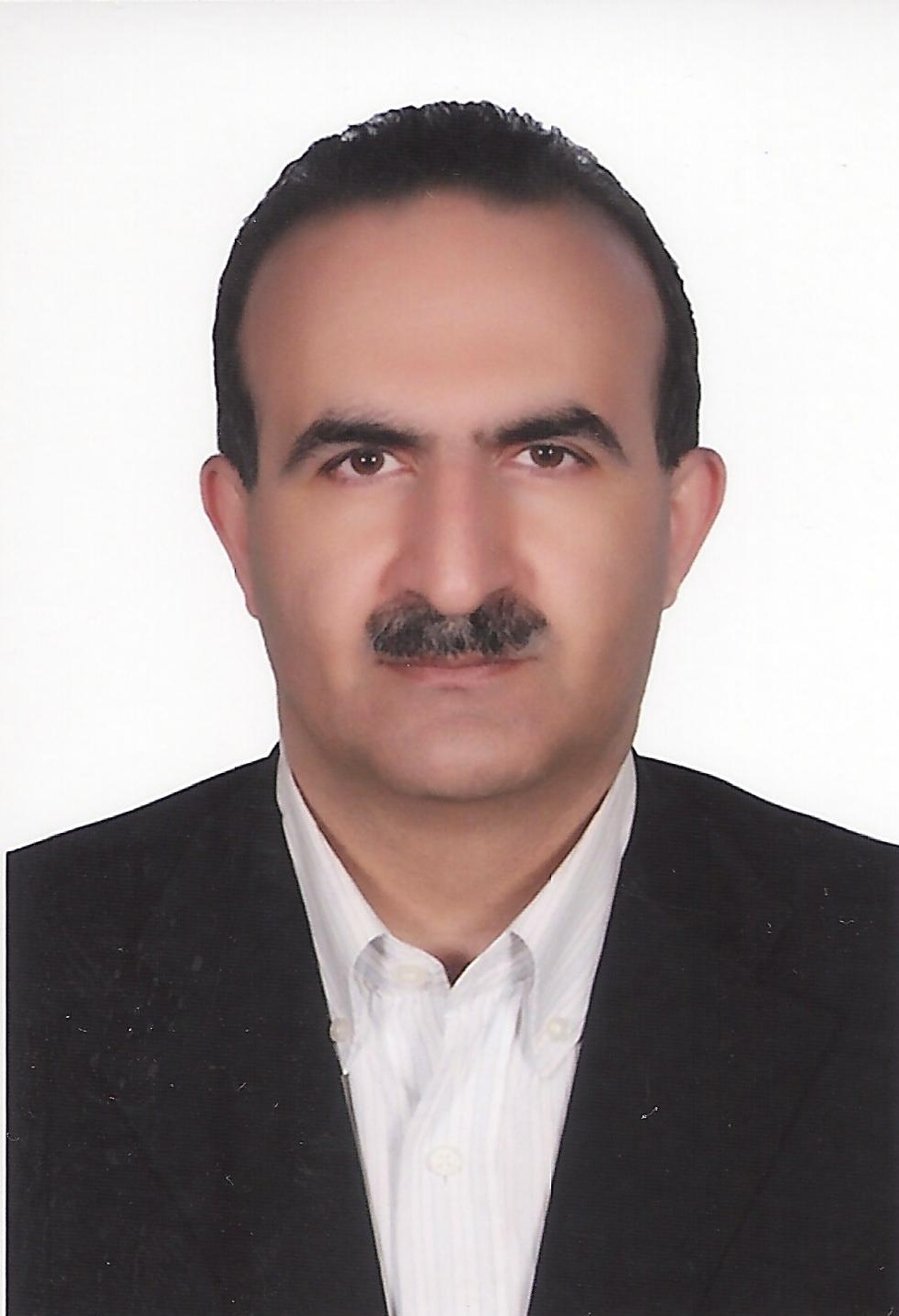}}] {Babak Hossein Khalaj} received the B.Sc. degree in Electrical Engineering from Sharif University of Technology, Tehran, Iran, in 1989, and M.Sc. and Ph.D. degrees in Electrical Engineering from Stanford University, Stanford, CA, USA, in 1993 and 1996, respectively. He is currently a Professor in the Department of Electrical Engineering at
Sharif University of Technology. He has been with the pioneering team at Stanford University where he was involved in adoption of multi-antenna arrays in mobile networks. He joined KLA-Tencor in 1995, as a Senior Algorithm Designer, involved on advanced processing techniques for signal estimation. From 1996 to 1999, he was with Advanced Fiber Communications and Ikanos Communications. Since then, he has been a Senior Consultant
in the area of data communications, and a Visiting Professor with CEIT, San Sebastian, Spain, from 2006 to 2007. He has co-authored many papers in signal processing and digital communications. He holds three U.S. patents
and was the recipient of the Alexander von Humboldt Fellowship from 2007 to 2008 and Nokia Visiting Professor Fellowship in 2018. 
\end{IEEEbiography}

\end{document}